\journal{Annual Reviews in Control}
\def\qed{\hfill \vrule height 7pt width 7pt depth 0pt}
\def\beq{\begin{equation}}
\def\eeq{\end{equation}}
\newtheorem{theorem}{Theorem}
\newtheorem{definition}[theorem]{Definition}
\newtheorem{proposition}[theorem]{Proposition}
\newtheorem{example}{Example}
\theoremstyle{remark}
\newtheorem{remark}[theorem]{Remark}
\newcommand{\ds}{\displaystyle}
\newcommand{\ba}{\begin{array}}
\newcommand{\ea}{\end{array}}
\newcommand{\be}{\begin{equation}}
\newcommand{\ee}{\end{equation}}
\newcommand{\mc}{\mathcal}
\newcommand{\ov}{\overline}
\newcommand{\1}{\mathbbm{1}}
\newcommand{\R}{\mathbb{R}}
\newcommand{\F}{\mathbb{F}}
\newcommand{\summ}{\sum\limits}
\newcommand{\de}{\mathrm{d}}
\newcommand{\se}{\text{ if }}
\DeclareMathOperator{\sgn}{sgn}
\def\1{\mathds{1}}
\def\R{\mathbb{R}}
\def\diag{{\rm diag}\,}
\def\F{{\cal F}}
\begin{document}
\begin{frontmatter}
\title{On resilient control of dynamical flow networks
}



\author{Giacomo Como\tnotetext[authornote]{The author is a member of the excellence centres LCCC and ELLIIT. His research has been supported by the Swedish Research Council through a Project Research Grant. 
}
}
\address{G.~Como is with the Department of Automatic Control, Lund University, BOX118, SE-22100 Lund, Sweden  and with the Lagrange Department of Mathematical Sciences, Politecnico di Torino, Corso Duca degli Abruzzi 24, 10129, Torino, Italy. Email:  {\tt\small giacomo.como@control.lth.se}, {\tt\small giacomo.como@polito.it}.} %









\begin{abstract}
Resilience has become a key aspect in the design of contemporary infrastructure networks. This comes as a result of ever-increasing loads, limited physical capacity, and fast-growing levels of interconnectedness and complexity due to the recent technological advancements. The problem has motivated a considerable amount of research within the last few years, particularly  focused on the dynamical aspects of network flows, complementing more classical static network flow optimization approaches. 

In this tutorial paper, a class of single-commodity first-order models of dynamical flow networks is considered. A few results recently appeared in the literature and dealing with stability and robustness of dynamical flow networks are gathered and originally presented in a unified framework. In particular, (differential) stability properties of monotone dynamical flow networks are treated in some detail, and the notion of margin of resilience is introduced as a quantitative measure of their robustness. While emphasizing methodological aspects ---including structural properties, such as monotonicity, that enable tractability and scalability--- over the specific applications, connections to well-established road traffic flow models are  made.
\end{abstract}
\begin{keyword}  dynamical flow networks \sep transportation networks \sep compartmental systems \sep distributed control \sep network resilience \sep network flow control \sep dynamical routing \sep robust control \sep monotone systems.



\end{keyword}

\end{frontmatter}
\section{Introduction}


As critical infrastructure networks, such as transport and energy, are being utilized closer and closer to their capacity limits, the complex interaction between physical systems, cyber layers, and human decision makers has created new challenges in simultaneously achieving efficiency and reliability. The recent technological advancements in terms of smart sensors, high-speed communication, and real-time decision capabilities have exacerbated the large-scale interconnected nature of these systems, and increased both the potential gains associated to their optimization and their inherent systemic risks. In fact, while designed to perform well under normal operation conditions, such complex systems tend to exhibit critical fragilities in response to unforeseen disruptions. Even if simply started from small local perturbations, such disruptions have the potential to build up through cascading mechanisms driven by the interconnected dynamics of the infrastructure network, possibly leading to detrimental systemic effects. The term \emph{resilience} refers to the ability of these systems ``to plan and prepare for, absorb, respond to, and recover from disasters and adapt to new conditions'' (definition by the US National Academy of  Sciences  \cite{NAS-2012}).

Whilst static network flow optimization has long been regarded as a fundamental design paradigm for infrastructure systems and represents a central area of mathematical programming \cite{Ahuja.ea:1993,Bertsekas:1998,Whittle:2007}, there is an increasing awareness that the full potential of the emerging technologies and the nature of the associated systemic risks can only be understood by developing systems modeling, robustness analysis, and control synthesis within a dynamical framework. This recognition is stimulating considerable interest in the control systems community. In this tutorial paper, based on a semi-plenary lecture given by the author at the 22nd International Symposium on the Mathematical Theory of Networks and Systems, a few recent results on stability and robustness of dynamical flow networks are presented. While emphasizing methodological aspects ---in particular, structural properties enabling tractability and scalability of the considered models--- over the specific applications, this paper also makes connections to well-established road traffic flow models.

Our focus is on first-order models of \emph{dynamical flow networks}, describing the flow of mass among a finite set of interconnected cells. Such dynamical systems have sometimes been referred to as \emph{compartmental systems} in some of the literature \cite{Jacquez.Simon:93,Walterand.Contreras:99}. Our main interest is on \emph{nonlinear} dynamical flow networks, with nonlinearities especially accounting for \emph{congestion} effects. Special attention is devoted to \emph{demand} and \emph{supply} constraints limiting, respectively, the maximum outflow from and the maximum inflow in the cells as a function of their current mass, as in Daganzo's cell transmission model for road traffic network flows \cite{Daganzo:94,Daganzo:95}. We introduce a class of \emph{monotone} dynamical flow networks, characterized by structural properties of the dependence of the flow variables on the network state. We show that, while this class is large enough to encompass many examples of applicative interest, the system structure of monotone dynamical flow networks is such that their dynamic behaviors, and especially stability and robustness properties, are analyzable in a tractable and scalable way. In particular, this paper: (i) presents results relating the (differential) stability of (nonlinear) monotone dynamical flow networks to graph-theoretical properties; (ii) introduces the notion of \emph{margin of resilience} as a measure of their robustness against exogenous perturbations; and (iii) studies a class of \emph{locally responsive feedback} routing and flow control policies that are able to achieve the maximum possible margin of resilience for a given network topology in spite of relying on local information only and requiring no global knowledge of the network. 

The remainder of this paper is organized as follows. In Section \ref{sec:linear0}, we first introduce dynamical flow networks.  In Section \ref{sec:linear}, we focus on their simplest instance, affine dynamical flow networks, for which we gather some results relating their (global, exponential) stability to outflow-connectivity properties of the network topology. In Section \ref{sec:nonlinear-stability}, we define the notion of monotone dynamical flow networks, present stability results that can be deduced for this system structure, and show how they can be applied to dual ascent dynamics for static convex network flow optimization. In Section \ref{sec:demand-supply}, we introduce nonlinear dynamical flow networks with demand and supply constraints and show how several examples of network flow dynamics from the literature that can be fit in this framework also belong to the class of monotone dynamical flow networks (either globally or locally), so that the stability results of Section \ref{sec:nonlinear-stability} can be successfully applied.  In Section \ref{sec:resilience}, we study robustness of nonlinear dynamical flow networks with respect to perturbations of the demand functions (hence, of the flow capacities) as well as of the external inflows. We introduce the notion of margin of resilience as a quantitative measure of robustness and compute the margin of resilience of different classes of distributed routing and flow control policies. 

We end this introductory section by gathering some notational conventions to be adopted throughout the paper. The sets of real numbers and of nonnegative real numbers are denoted by $\R$ and $\R_+$, respectively. The all-one vector is denoted by $\1$, the all-zero vector simply by $0$, the identity matrix by $I=\diag(\1)$, and the transpose of a matrix $M$ by $M^T$. Inequalities between vectors are meant to hold true entrywise, i.e., if $a,b\in\R^n$, then $a\ge b$ means that $a_i\ge b_i$ for all $i=1,\ldots,n$. A square matrix $M$ is called: \emph{nonnegative} if all its entries are nonnegative; \emph{Metzler} if all its non-diagonal entries are nonnegative, i.e., $M_{ij}\ge0$ for all $i\ne j$ \cite{Berman.Plemmons:94}; (row) \emph{diagonally dominant} if $|M_{ii}|\ge\sum_{j\ne i}|M_{ij}|$; \emph{compartmental} if it is Metzler and $M\1\le0$; 
\emph{Hurwitz} if all its eigenvalues have negative real part; and \emph{substochastic} if it is nonnegative and such that $M\1\le\1$, i.e., its rows all sum up to less than or equal to $1$. A directed graph, shortly \emph{digraph}, is the pair $(\mc V,\mc E)$ of a finite node set $\mc V$ and a link set $\mc E\subseteq\mc V\times\mc V$, whereby links $(i,j)\in\mc E$ are interpreted as pointing from node $i$ to node $j$. A length-$l$ \emph{path} from a node $i$ to a node $j$ in a digraph $\mc G=(\mc V,\mc E)$  is a sequence of nodes $\{v_0,v_1,\ldots,v_l\}\subseteq\mc V$ such that $v_0=i$, $v_l=j$, $v_h\ne v_k$ for all $0\le h<k\le l$, and $(v_{h-1},v_h)\in\mc E$ for all $h=1,\ldots,l$. The gradient of a function $f:\R^n\to\R^m$ in a point $x$ of its domain is the matrix $\nabla f(x)\in\R^{m\times n}$ whose entries are the partial derivatives $[\nabla f(x)]_{ij}=\partial f_i(x)/\partial x_j$: if $m=h+l$ and the variable is explicitly written as $x=(y,z)$ where $y\in\R^h$ and $z\in\R^l$, then $\nabla_y f(y,z)\in\R^{h\times n}$ and $\nabla_z f(y,z)\in\R^{l\times n}$ are the left and right blocks of $\nabla f(x)\in\R^{m\times n}$, so that  $\nabla f(x)=[\nabla_y f(y,z),\nabla_z f(y,z)]$. Finally, we use the notation $a\vee b=\max\{a,b\}$ for the maximum between two scalar values $a,b\in\R$.

\section{Dynamical flow networks}\label{sec:linear0}

This paper focuses on single-commodity, first-order models of dynamical flow networks, also referred to as compartmental systems in some of the literature \cite{Jacquez.Simon:93,Walterand.Contreras:99}. These are dynamical systems with $n$-dimensional state vector $x=x(t)$ that belongs to the nonnegative orthant $\R_+^n$  at any time $t\ge0$.  The entries $x_i=x_i(t)$ of the state vector represent the mass in each \emph{cell} $i\in\mc I$, where $\mc I:=\{1,\ldots,n\}$ is a finite set of interconnected cells. 
The network flow dynamics can be compactly expressed as 
\be\label{compact}\dot x=u+F^T\1-F\1-w\ee
where: 
\begin{enumerate}
\item[(i)] $u\in\R_+^n$ is a nonnegative vector supported on a subset $\mc R\subseteq\mc I$ whose entries $u_i$ model the \emph{external inflows} in the cells $i\in\mc R$; 
\item[(ii)]$F\in\R_+^{n\times n}$ is a nonnegative matrix supported on a subset $\mc A\subseteq\mc I\times\mc I$ of ordered pairs of adjacent cells whose nonzero entries $F_{ij}$ represent the \emph{flow} from cell $i$ to cell $j$ for all pairs $(i,j)\in\mc A$;\footnote{Throughout, we will always assume that $(i,i)\notin\mc A$, so that $F_{ii}=0$, for all $i\in\mc I$.} 
\item[(iii)] and $w\in\R_+^n$ is a nonnegative vector supported on a subset $\mc S\subseteq\mc I$ whose entries $w_i$ model the \emph{outflows} from the cells $i\in\mc R$ towards the external environment. 
\end{enumerate}
Hence, in particular, we have 
\be\label{Fw}u_{i}=0\,,\quad i\notin\mc R\,,\qquad F_{ij}=0\,,\quad (i,j)\notin\mc A\,,\qquad w_i=0\,,\quad i\notin\mc S\,.\ee
While in typical applications the external inflows are ---either constant or time-varying--- exogenous inputs, the flow variables $F$ and $w$ in general may depend both on the state $x$ (thus allowing for feedback) and directly on the time $t$ (thus allowing for exogenous time variability). 

Entrywise rewriting of the dynamics \eqref{compact} reads
$$\label{entrywise}\dot x_i=u_i+\sum_{j\in\mc I}F_{ji}-\sum_{j\in\mc I}F_{ij}-w_i\,,\qquad i\in\mc I\,,$$
which is physically interpreted as a \emph{mass conservation law}: the rate of change of the mass in cell $i$ equals the imbalance between the total inflow in it and the total outflow from it, the former coinciding with the sum of the external inflow $u_i$ and the aggregate inflow from the other cells $\sum_{j\in\mc I}F_{ji}$, and the latter being given by the aggregate outflow towards other cells $\sum_{j\in\mc I}F_{ij}$ and the outflow towards the external environment $w_i$. 
Invariance of the nonnegative orthant $\R_+^n$ for the state vector $x$ is guaranteed by the additional constraint 
\be\label{eq:orthant-invariance}x_i=0\qquad\Longrightarrow\qquad w_i+\sum\nolimits_{j}F_{ij}=0\,,\ee
i.e., the outflow from an empty cell is always $0$. 

An equivalent form of \eqref{compact} that will often prove convenient is 
\be\label{eq:dynRz}\dot x=u-(I-R^T)z\,,\ee
where $$\label{z-def} z=F\1+w$$ is a nonegative $n$-dimensional vector whose entries $z_i=\sum_{j\in\mc I}F_{ij}+w_i$ represent the \emph{total outflows} from the cells $i\in\mc I$, and $R\in\R^{n\times n}$ is a \emph{routing matrix} whose entries $R_{ij}$, sometimes referred to as \emph{split ratios}, represent the fraction of outflow from cell $i$ directed towards cell $j$, i.e., they satisfy $$F_{ij}=R_{ij}z_i\,,\qquad  w_i=\left(1-\sum\nolimits_{j\in\mc I}R_{ij}\right)z_i\,,\qquad i,j\in\mc I\,.$$ The routing matrix $R$ is supported on $\mc A$, i.e., such that $R_{ij}=0$ for all $(i,j)\notin\mc A$ and it is substochastic, i.e., nonnegative and with row sums $\sum_jR_{ij}\le1$, with strict inequality possible only at the cells $i\in\mc S$ from which direct outflow towards the external environment is allowed. 

\begin{figure}
\begin{center}
\includegraphics[width=8.5cm]{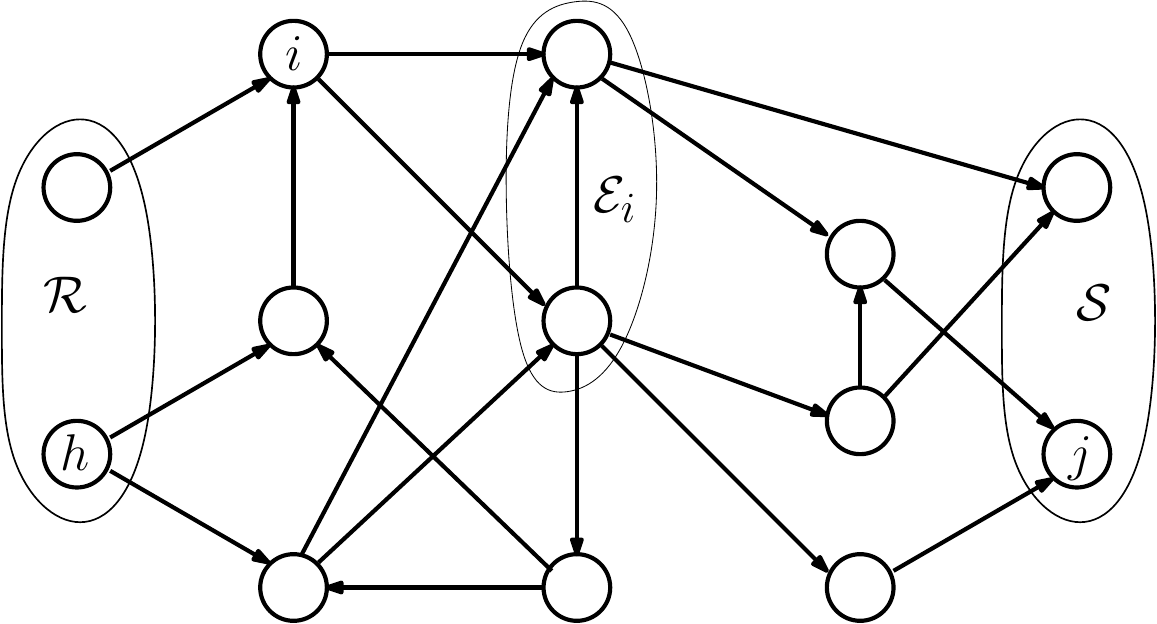}\end{center}
\caption{\label{fig:1} A network topology $\mc T=(\mc I,\mc A,\mc R,\mc S)$ whose nodes $i\in\mc I$ represent cells and links $(i,j)\in\mc A$ represent possible flows between adjacent cells. The subsets of cells $\mc R$ and $\mc S$ is where external inflow and, respectively, outflow towards the external environment, is possible. This network topology is both inflow-connected, as every cell $i\in\mc I$ is reachable from some $h\in\mc R$ by a directed path, and outflow-connected, as from every cell $i$ some cell $j\in\mc S$ can be reached by a directed path. The outneighborhood $\mc E_i$ of a cell $i$ is also highlighted. Note that, in this specific network topology, the out-neighborhoods of two nodes  either coincide or are disjoint. In fact, $\mc G=(\mc I,\mc A)$ is the line-digraph (see Remark \ref{ft}) of the digraph $\mc N=(\mc V,\mc I)$ displayed in Figure \ref{fig:2}.} 
\end{figure}

The sets $\mc I$, $\mc A$, $\mc R$, and $\mc S$ determine the  \emph{network topology}, to be denoted by $\mc T=(\mc I,\mc A,\mc R,\mc S)$. In particular, it is natural to interpret the sets $\mc I$ and $\mc A$, respectively, as the node set and  the link set of a directed graph $\mc G=(\mc I,\mc A)$: see Figure \ref{fig:1}. 
We refer to a cell $i\in\mc I$ as \emph{outflow-connected} if either $i\in\mc S$ or there exists a path in $\mc G$ starting in $i$ and ending in some $j\in\mc S$. The network topology $\mc T$ is referred to as outflow-connected if every cell $i\in\mc I$ is outflow-connected. Analogously,  a cell $i\in\mc I$ is referred to as \emph{inflow-connected} if either $i\in\mc R$ or there exists a path in $\mc G$ starting in some $j\in\mc R$ and ending in $i$, while the network topology is called inflow-connected if every cell $i$ is inflow-connected. 
Throughout the paper, we will use the notation $$\mc E_i:=\{k\in\mc I:\,(i,k)\in\mc A\}\,,\qquad i\in\mc I\,,$$ for the out-neighborhoods of the cells in $\mc T$. 

\begin{remark}\label{ft}In some applications, especially in road traffic network flows, the cells $i\in\mc I$ are often represented as links of a different digraph $\mc N=(\mc V,\mc I)$ whose node set $\mc V=\{v_0,v_1,\ldots,v_m\}$ includes a node $v_0$ representing the external environment and $m$ nodes $v_1,\ldots,v_m$ representing either junctions or interfaces between consecutive cells. (See Figure \ref{fig:2}.) In this case, the sets $$\mc R=\{i\in\mc I:\,i=(v_0,v_k)\text{ for some }1\le k\le m\}\,,\qquad \mc S=\{i\in\mc I:\,i=(v_k,v_0)\text{ for some }1\le k\le m\}$$ represent on-ramps and off-ramps, respectively, and $$\mc A=\{(i,j)\in\mc I\times\mc I:\,i=(v_{h},v_k)\text{ and }j=(v_{k},v_l)\text{ for some }0\le h,l\le m\text{ and }1\le k\le m\}$$ is the set of ordered pairs of adjacent cells. Clearly, to any such digraph $\mc N=(\mc V,\mc I)$, we can associate a network topology $\mc T=(\mc I,\mc A,\mc R,\mc S)$: in particular, $\mc G=(\mc I,\mc A)$ turns out to be the so-called \emph{line-digraph} \cite{Orlin:1978} of $\mc N=(\mc V,\mc I)$. Observe that the network topology $\mc T=(\mc I,\mc A,\mc R,\mc S)$ corresponding to the line-digraph of some $\mc N=(\mc V,\mc I)$ has the following properties: (a) cells $r\in\mc R$ are sources, i.e., they have no incoming links; (b) cells $s\in\mc S$ are sinks, i.e., they have no outgoing links; and (c) the out-neighborhoods $\mc E_i,\mc E_j$ of two cells $i,j\in\mc I$ either coincide (if $i$ and $j$ point towards the same junction $v\in\mc V\setminus\{v_0\}$) or they are disjoint (if they do not).\footnote{On the other hand, properties (a), (b), and (c) are not sufficient conditions for $\mc G=(\mc I,\mc A)$ to be the line-digraph of $\mc N=(\mc V,\mc I)$: necessary and sufficient conditions can be found in \cite{Orlin:1978}.} Hence, our choice of formulating results for the network topology $\mc T=(\mc I,\mc A,\mc R,\mc S)$ in which cells $i\in\mc I$ are represented as nodes is in the interest of both clarity and greater generality,  as the aforementioned conventional representation of road traffic networks can be recovered as a special case. \end{remark}  
\begin{figure}
\begin{center}
\includegraphics[width=7.5cm]{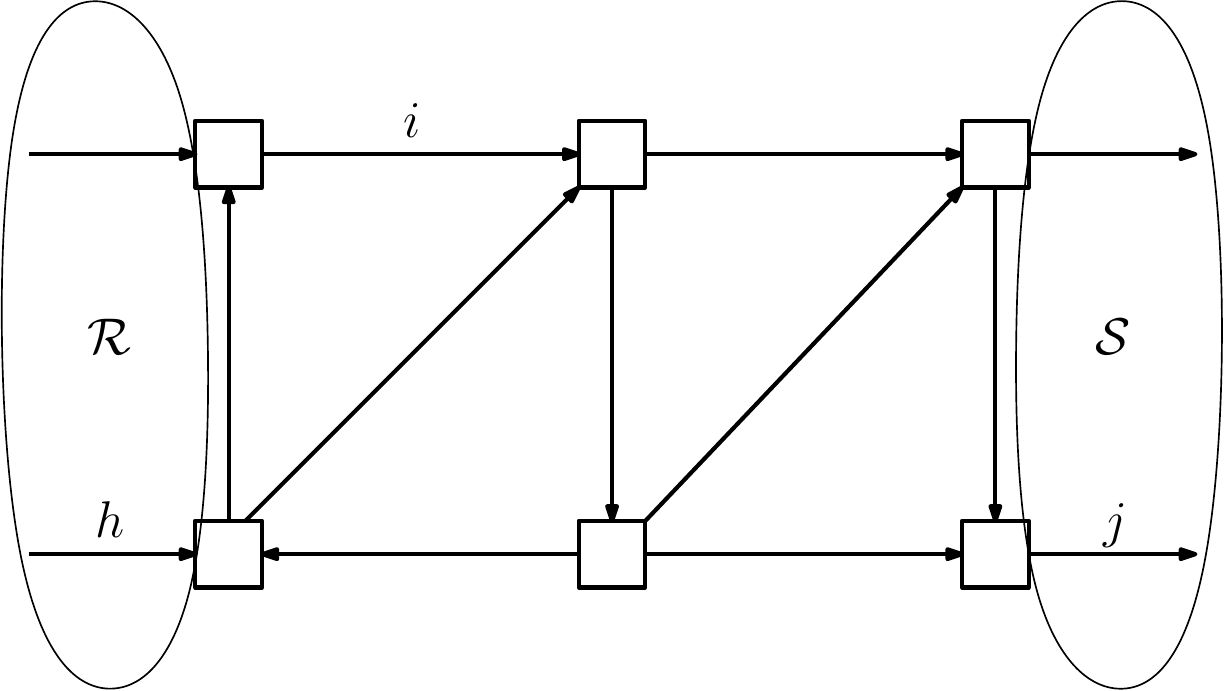} \end{center}
\caption{\label{fig:2} A digraph $\mc N=(\mc V,\mc I)$ representing a road traffic network: nodes $v\in\mc V$ stand for junctions and links $i\in\mc I$ represent cells that correspond to (portions of) roads. The subsets of cells $\mc R$ and $\mc S$ represent, respectively, the on-ramps and the off-ramps. The line digraph associated to $\mc N=(\mc V,\mc I)$ displayed above is the network topology $\mc T=(\mc I,\mc A,\mc R,\mc S)$ displayed in Figure \ref{fig:1}.} 
\end{figure}
\medskip



%
%
%
%
%
%

\section{Affine Dynamical Flow Networks} \label{sec:linear}
The simplest special case of dynamical flow networks is the class of \emph{affine dynamical flow networks}, characterized by the property that the flows $F_{ij}$ and $w_i$ are linear functions of the state vector $x$. In fact, nonnegativity of the flow variables and dynamic invariance of the nonnegative orthant $\R_+^n$ imply that such linear relationships necessarily take the form 
$$F_{ij}=\alpha_{ij}x_i\,,\qquad w_i=\beta_{i}x_i\,,\qquad i,j\in\mc I\,,$$
where the $\alpha\in\R_+^{n\times n}$ and $\beta\in\R_+^{n}$ are such that $\alpha_{ij}>0$ for all pairs of adjacent cells $(i,j)\in\mc A$, $\alpha_{ij}=0$ for all $i\ne j$ such that $(i,j)\notin\mc A$, while $\beta_i>0$ for $i\in\mc S$ and $\beta_i=0$ for $i\in\mc I\setminus\mc S$. By putting $$L_{ij}=-\alpha_{ij}\,,\qquad L_{ii}=\beta_i+\sum_{j\ne i}\alpha_{ij}\,,\qquad i\ne j\in\mc I\,,$$ and assembling the coefficients $L_{ij}$ in an $n\times n$ matrix $L$, one gets the following form for affine dynamical flow networks 
\be\label{eq:affine1}\dot x=u-L^Tx\,.\ee
Observe that $-L_{ij}=\alpha_{ij}\ge0$ for all $i\ne j\in\mc I$, i.e., $-L$ is Metzler, and $ L\1=\beta\ge0$. Hence, $-L$ is a compartmental matrix and, in particular, equation \eqref{eq:affine1} describes a positive affine system \cite{Farina.RInaldi:2000}.  We refer  to the matrix $L$ as outflow-connected if the corresponding network topology $\mc T$ is. Observe that the diagonal $d=(L_{ii})_{i\in\mc I}$ of an outflow-connected compartmental matrix $L$ is strictly positive and, upon defining the substochastic routing matrix $R=I-D^{-1}L$, where $D=\diag(d)$, equation \eqref{eq:affine1} can be equivalently rewritten as 
\be\label{eq:affine2}\dot x=u-(I-R^T)Dx\,.\ee

Some stability properties of affine dynamical flow networks are gathered in the following proposition. 

\begin{proposition}\label{prop:1} Every affine dynamical flow network \eqref{eq:affine1} is non-expansive in the $l_1$-distance, i.e., the $l_1$-distance between any two of its solutions is never increasing in time.
Moreover, if the compartmental matrix $-L$ is outflow-connected, then it is also Hurwitz stable, and, furthermore, if the inflow vector $u$ is constant, then $x^*=(L^T)^{-1}u$ is a globally exponentially stable equilibrium of \eqref{eq:affine1} and $V(x)=\sum_{i\in\mc I}|x_i-x^*_i|$ is a Lyapunov function. 
\end{proposition}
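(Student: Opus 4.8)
The plan is to establish the three assertions in sequence, each leaning on the previous. For the non-expansiveness claim I would fix two solutions $x(t),y(t)$ of \eqref{eq:affine1} and study their difference $e=x-y$, which satisfies the homogeneous equation $\dot e=-L^Te$ because the common input $u$ cancels. I would then bound the upper Dini derivative $D^+\|e\|_1$ of $\|e\|_1=\sum_i|e_i|$ along the flow, using $\frac{d}{dt}|e_i|=\sgn(e_i)\dot e_i$ where $e_i\ne0$ and the elementary inequality $\sgn(e_i)e_j\le|e_j|$ off the diagonal. Substituting $L_{ii}=\beta_i+\sum_{j\ne i}\alpha_{ij}$ and $L_{ji}=-\alpha_{ji}$ and interchanging the order of summation in the cross terms, the off-diagonal contributions telescope against the diagonal and leave
\be D^+\|e\|_1\le-\sum_{i\in\mc I}\beta_i|e_i|\le0\,,\ee
which is exactly non-expansiveness in $l_1$. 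Running the same computation with $y\equiv x^*$ will later certify that $V(x)=\|x-x^*\|_1$ is non-increasing.

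For the Hurwitz claim I would use the factorization $L=D(I-R)$ with $D=\diag(d)$, $d_i=L_{ii}>0$, and $R$ the substochastic routing matrix, reducing the problem to showing $\rho(R)<1$. The key step is a maximum-principle argument on the sequence $R^k\1$: since $R\ge0$ and $R\1\le\1$, one has $R^{k+1}\1\le R^k\1$, so the sequence decreases entrywise to a limit $c_\infty=Rc_\infty\ge0$. Choosing a cell where $c_\infty$ attains its maximum, equality in $c_{\infty,i}=\sum_jR_{ij}c_{\infty,j}$ forces both $\sum_jR_{ij}=1$ (hence $i\notin\mc S$) and $c_\infty$ to be constant on the out-neighborhood $\mc E_i$; outflow-connectivity then propagates the maximum along a directed path to some sink in $\mc S$, where the row sum is strictly below one, a contradiction unless $c_\infty=0$. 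Thus $R^k\to0$, so $\rho(R)<1$, and $v=(I-R)^{-1}\1=\sum_{k\ge0}R^k\1\ge\1>0$ satisfies $Lv=D\1=d>0$, i.e. $-Lv<0$. Since $-L$ is Metzler, the existence of such a strictly positive vector is equivalent to $-L$ being Hurwitz, equivalently to $L$ being a nonsingular M-matrix with $L^{-1}\ge0$.

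Finally, for a constant $u$, Hurwitzness makes $L^T$ invertible, so $\dot x=0$ has the unique solution $x^*=(L^T)^{-1}u$, which lies in $\R_+^n$ because $L^{-1}\ge0$ and $u\ge0$. Global exponential stability is then immediate from linear theory: $e=x-x^*$ obeys $\dot e=-L^Te$ with $-L^T$ Hurwitz (same spectrum as $-L$), giving an estimate $\|e(t)\|\le Ce^{-\lambda t}\|e(0)\|$. That $V(x)=\|x-x^*\|_1$ is a Lyapunov function follows from the Part-1 bound $D^+V\le-\sum_i\beta_i|x_i-x_i^*|\le0$ together with positive-definiteness of $V$ about $x^*$; convergence itself can be read either from the exponential bound just obtained or, staying inside the $l_1$ framework, from a LaSalle argument in which $D^+V=0$ forces $e_i=0$ on $\mc S$ and sign-constancy of $e$ across links, whence outflow-connectivity yields $e\equiv0$.

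I expect the main obstacle to be the Hurwitz step, namely the passage from the combinatorial outflow-connectivity hypothesis to the spectral conclusion $\rho(R)<1$; the non-expansiveness estimate is a routine Dini-derivative computation, and the Part-3 conclusions follow once uniqueness of the equilibrium and the location of the spectrum are in hand. Particular care is needed in the maximum-principle argument to accommodate reducibility of $R$ (out-neighborhoods may be disconnected), which is precisely why path-based outflow-connectivity, rather than irreducibility, is the right structural hypothesis for propagating the maximum to a leaky sink.
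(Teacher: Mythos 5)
Your proposal is correct, and it differs from the paper's proof in two substantive ways. For the non-expansiveness claim, the paper works with the integral identity $\|y(t+s)\|_1-\|y(t)\|_1=\int_t^{t+s}-y^TL\,\sgn(y)\,\de\tau$ and proves $y^TL\,\sgn(y)\ge0$ by partitioning the cells according to the sign pattern of $y$ and exploiting, blockwise, that $-L$ is Metzler and $L\1\ge0$; your Dini-derivative computation with the interchange of summations is the same underlying estimate carried out entrywise, and it buys slightly more, namely the quantitative bound $D^+\|e\|_1\le-\sum_i\beta_i|e_i|$ (consistent with Remark \ref{remark-l1}: since $\beta_i=0$ off $\mc S$, this still gives no exponential rate in general). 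The one point you should make explicit is the case $e_i=0$, where $D^+|e_i|=|\dot e_i|\le\sum_{j\ne i}\alpha_{ji}|e_j|$, so your per-coordinate bound survives; this is routine. The larger difference is the Hurwitz step: the paper does not prove it, citing instead the classical equivalence between Hurwitz stability of a compartmental matrix and outflow-connectivity (Perron--Frobenius theory, Theorem 3 of the Jacquez--Simon reference), whereas you reconstruct that result: monotone decrease of $R^k\1$ to a fixed point $c_\infty=Rc_\infty$, a maximum-principle/path-propagation argument in which outflow-connectivity (not irreducibility) forces $c_\infty=0$, hence $\rho(R)<1$, and finally the positive vector $v=(I-R)^{-1}\1\ge\1$ with $Lv=d>0$ certifying Hurwitzness via the standard criterion for Metzler matrices. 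This costs length and still leans on one classical fact (the positive-vector characterization), but it is self-contained at the graph-theoretic level, makes transparent exactly where outflow-connectivity enters, and yields as a byproduct $L^{-1}\ge0$ and hence $x^*\ge0$, a point the paper's proof leaves implicit (cf.\ Remark \ref{remark-equilirbiumpositivity}). Your treatment of the last claim ---uniqueness of $x^*$ and exponential stability from Hurwitzness of the affine dynamics, with $V(x)=\|x-x^*\|_1$ obtained by pinning one of the two solutions at $x^*$--- coincides with the paper's.
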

\begin{proof}
For any two solutions $x(t)$ and $\tilde x(t)$ of \eqref{eq:affine1}, let $y(t)=x(t)-\tilde x(t)$ be their difference, and note that $\dot y=-L^Ty$. For all $t,s\ge0$, one has that 
\be\label{l1}||x(t+s)-\tilde x(t+s)||_1-||x(t)-\tilde x(t)||_1=||y(t+s)||_1-||y(t)||_1=\int_t^{t+s}-y(\tau)^TL\sgn(y(\tau))\de \tau\,,\ee where, $\sgn(y)\in\{-1,0,1\}^n$ stands for the sign vector of $y$. Now, for any $y\in\R^n$, one can partition the cell set as $\mc I=\mc I_{-}\cup\mc I_0\cup\mc I_{+}$ according to the sign of $y$'s entries. Let $L_{(s,r)}$ and $y_{(s)}$ for $s,r\in\{-,0,+\}$ be the corresponding blocks of $L$ and $y$, respectively, and notice that 
\be\label{sum}y^TL\sgn(y)=y_{(+)}^TL_{(+,+)}\1-y_{(+)}^TL_{(+,-)}\1+y_{(-)}^TL_{(-,+)}\1-y_{(-)}^TL_{(-,-)}\1\,.\ee
Now, note that, since $-L$ is Metzler, the blocks $L_{(+,-)}$ and $L_{(+,-)}$ are nonpositive, so that $y_{(+)}^TL_{(+,-)}\1\le0$ and $y_{(-)}^TL_{(-,+)}\1\ge0$. On the ther hand, since $-L$ is Metzler and $L\1\ge0$, one has that $L_{(+,+)}\1\ge0$ and $L_{(-,-)}\1\ge0$. In turn, these inequalities imply, respectively, that $y_{(+)}^TL_{(+,+)}\1\ge0$ and $y_{(-)}^TL_{(-,-)}\1\le0$. Hence, all the addends in the right-hand side of \eqref{sum} are nonnegative, so that $$y^TL\sgn(y)\ge0$$ and \eqref{sum} implies that $$||x(t+s)-\tilde x(t+s)||_1\le||x(t)-\tilde x(t)||_1\,,$$ so that the first claim is proved. 

That Hurwitz stability of a compartmental matrix $-L$ is equivalent to its outflow-connectivity is a classical result following from the Perron-Frobenius theory, see, e.g., \cite[Theorem 3]{Jacquez.Simon:93}. Since the dynamical system \eqref{eq:affine1} is affine, Hurwitz stability of $-L$ is equivalent to global exponential stability of the equilibrium $x^*$. Finally, that the $l_1$-distance from equilibrium $||x-x^*||_1$ is a Lyapunov function follows from the first claim by letting one of the two solutions start from the equilibrium $x^*$ itself. \qed\end{proof}\medskip

\begin{remark}\label{remark-equilirbiumpositivity}
Observe that, using the decomposition $L=D(I-R)$ the equilibrium point  $x^*=(L^T)^{-1}u$ of an outflow-connected affine dynamical flow network \eqref{eq:affine1} may be rewritten as
\be\label{x*expansion}x^*=D^{-1}z^*\,,\qquad z^*=(I-R^T)^{-1}u=u+R^Tu+(R^{2})^Tu+(R^{3})^Tu+\ldots\,.\ee
The rightmost side of equation \eqref{x*expansion} states that the entries of the equilibrium point satisfy $x^*_i=z_i^*/L_{ii}$ where $z_i^*$ is the aggregate of the external inflow $u_i$ in cell $i$, plus the weighted sum $\sum_jR_{ji}u_j$ of the external inflows in the cells $j$ immediately upstream of $i$, plus the weighted sum $\sum_j(R^2)_{ji}u_j$ of the external inflows in the cells $j$ two hops upstream of $i$, and so on. Here, convergence of the nonnegative term series $\sum_{k\ge0}(R^T)^ku$ is guaranteed by the fact that the spectral radius of $R$ is strictly less than $1$. Recall that, for every inflow-connected cell $i\in\mc I$ there exists some cell $h\in\mc R$ from which $i$ is reachable by a directed path (c.f.~Figure \ref{fig:1}) in the digraph $\mc G=(\mc I,\mc A)$, so that there exists some $k\ge0$ such that $(R^k)_{hi}u_h>0$. It then follows from \eqref{x*expansion} that, if the network is both inflow- and outflow-connected, then the equilibrium $x^*$ has strictly positive entries. \end{remark}

\begin{remark}\label{remark-l1}While never increasing, the $l_1$-distance from equilibrium $\sum_{i\in\mc I}|x_i-x^*_i|$ of the state vector of a general outflow-connected affine dynamical flow network, does not necessarily decay to $0$ exponentially fast as time grows large (it does so in the special case when $\mc S=\mc I$). Exponential stability of \eqref{eq:affine1} can instead be certified by weighted $l_1$-distances of the form $\sum_i\omega_i|x_i-x_i^*|$, for positive values of the $\omega_i$'s that depend on the specific values of the entries of $L$. In fact, existence of such sum- as well as max-separable Lyapunov functions is equivalent to stability for positive (not only compartmental) systems \cite[Proposition 1]{Rantzer:2015}, and this is at the heart of the scalability properties of these systems. (See \cite{Coogan:16} for extensions of this result to nonlinear monotone systems.) In contrast, the unweighted $l_1$-distance can be used to prove global asymptotic stability of compartmental systems (via LaSalle's theorem) using structural properties only, i.e., properties that are independent of the specific values of the entries of the compartmental matrix $-L$, but depend just on their sign pattern. As we shall see in the Section \ref{sec:nonlinear-stability}, such structural properties of the $l_1$-distance carry over to a signifcant class of nonlinear dynamical flow networks for which they are very useful in establishing stability and robustness results.\end{remark}

We conclude this section by noticing that affine dynamical flow network models cannot account for congestion effects as they prescribe that the total outflow from a cell $z_i=L_{ii}x_i$ is a linear function of the mass $x_i$ on it. Stated in other terms, in such affine dynamical flow network models, the rate $z_i/x_i$ at which single units of mass flow out of a cell $i$ is a constant $L_{ii}$ that does not decrease when the  mass $x_i$ in that cell, or the mass $x_j$ in other neighbor cells $j$ increase. 

%
%
%

\section{Monotone dynamical flow networks and their stability} \label{sec:nonlinear-stability}
In this and the next sections, we study a class of nonlinear dynamical flow networks and analyze their stability and robustness properties. 
While it is known that nonlinear dynamical flow networks can exhibit arbitrarily complex (including chaotic) dynamic behaviors \cite{Jacquez.Simon:93}, we focus on a class of nonlinear dynamical flow networks which is rich enough to encompass many relevant examples and at the same time has enough additional system structure to allow for tractable analysis and synthesis. Such dynamical system structure proving particularly relevant in this context is \emph{monotonicity}. 

Recall that a dynamical system is referred to as monotone if its trajectories preserve the partial order between initial states and inputs. \cite{Hirsch:1983a,Smith:1995,HirschMonotoneSystems06}
A classical result known as Kamke's theorem \cite{KamkeAM32,AngeliTAC03} establishes that, for dynamical systems of the form $\dot x=f(x,u)$ with Lispchitz-continuous $f$, monotonicity is equivalent to $\nabla_uf(x,u)$ being a nonnegative matrix and $\nabla_xf(x,u)$ being a Metzler matrix for almost every $x$ and $u$.  Considering their structure \eqref{compact}, we propose the following definition of monotone dynamical flow networks. 
\begin{definition}\label{def:monotone}
A dynamical flow network $$\dot x=u+f(x)\,,\qquad f(x)=F^T(x)\1-F(x)\1-w(x)\,,$$ 
with Lipschitz-continuous flow functions $F(x)$ and $w(x)$, is \emph{monotone} in a domain $\mc D\subseteq\R^n_+$ if $(\nabla f(x))^T$ is a compartmental matrix, i.e., if
\be\sum_{k\in\mc I}\left(\frac{\partial }{\partial x_j}F_{ki}(x)-\frac{\partial }{\partial x_j}F_{ik}(x)\right)-\frac{\partial }{\partial x_j}w_i(x)\ge0\,,\qquad \sum_{k\in\mc I}\frac{\partial}{\partial x_j}w_k(x)\ge0\,,\qquad \forall i\ne j\in\mc I\,,\label{eq:monotone}\ee
for almost every $x\in\mc D$.
\end{definition}

Observe that equation \eqref{eq:monotone} requires that the transpose of the Jacobian matrix of $f(x)$ be compartmental rather than simply Metzler, hence it is more restrictive than Kamke's conditions for monotone dynamical systems, also referred to as cooperative systems \cite{HirschJMA85,HirschPS03}. In fact, we have already pointed out in Section \ref{sec:linear} that the matrix $-L^T$ determining the dynamics of affine dynamical flow networks \eqref{eq:affine1} is a compartmental matrix. I.e., affine dynamical flow networks are always monotone:  
 this can be traced back to the fact that, for linear systems, preservation of the nonnegative orthant is equivalent to preservation of the partial ordering between trajectories. 
 On the other hand, for nonlinear dynamical flow networks, monotonicity is an additional property that further constraints their dynamical behavior. However, as it turns out, many nonlinear dynamical flow networks of interest are monotone either globally (i.e., on the whole state space) or locally (i.e., in a subdomain of their state space typically including the origin $0$). 
 
 We now present a first example of nonlinear dynamical flow network here, while five more examples will be discussed in Section \ref{sec:demand-supply}. 
 
 \begin{example}\label{ex6}
For a network topology $\mc T=(\mc I,\mc A,\mc R,\mc S)$ that is both inflow- and outlflow-connected, and a constant inflow vector $u$ supported on $\mc R$, consider the following static network flow optimization problem \cite{Bertsekas:1998,Whittle:2007}: 
\be\label{conv-min}\min_{\substack{\\F\in\R_+^{n\times n},w\in\R_+^{n}:\\\eqref{Fw}\text{ and}\\[1pt]u+F^T\1=F\1+w}}\quad
\sum_{(i,j)\in\mc A}\psi_{ij}(F_{ij})+\sum_{k\in\mc S}\psi_k(w_k)\,,\ee
where $\psi_{ij}(\,\cdot\,)$, for $(i,j)\in\mc A$, and $\psi_k(\,\cdot\,)$, for $k\in\mc S$, are nonnegative-real-valued, twice differentiable, strictly convex, and increasing cost functions, such that $$\lim\limits_{y\to+\infty}\psi'_{ij}(y)=\lim\limits_{y\to+\infty}\psi'_k(y)=+\infty\,,\qquad (i,j)\in\mc A\,,\qquad k\in\mc S\,.$$  
The Lagrangian of the minimization problem \eqref{conv-min} reads 
$$\label{Lagrangian}
\ba{rcl}L(F,w,x)&=&
\ds\sum_{(i,j)\in\mc A}\psi_{ij}(F_{ij})+\sum_{k\in\mc S}\psi_k(w_k)+\sum_{i\in\mc I}x_i\Big(u_i+\sum_{j\in\mc I}F_{ji}-\sum_{j\in\mc I}F_{ij}-w_i\Big)\\[15pt] 
&=&\ds\sum_{(i,j)\in\mc A}\left(\psi_{ij}(F_{ij})-(x_i-x_j)F_{ij}\right)+\sum_{k\in\mc S}\left(\psi_k(w_k)-w_kx_k\right)+\sum_{i\in\mc R}x_iu_i\,,
\ea$$
where $x_i$ is the Lagrange multiplier associated to the mass conservation constraint at cell $i$. 
We then get the first-order conditions for optimality 
\be\label{optFw}F_{ij}=\left\{\ba{lcl}0&\se&x_i-x_j<\psi_{ij}'(0)\\[5pt]
(\psi'_{ij})^{-1}(x_i-x_j)&\se&x_i-x_j\ge\psi_{ij}'(0)\,,\ea\right.\qquad 
w_{k}=\left\{\ba{lcl}0&\se&x_k<\psi_{k}'(0)\\[5pt]
(\psi'_{k})^{-1}(x_k)&\se&x_k\ge\psi_{k}'(0)\,.\ea\right.\ee
Observe that strict convexity of the cost functions guarantees invertibility of their derivatives in the range $[\psi_{ij}'(0),+\infty)$ and $[\psi_{k}'(0),+\infty)$, for all $(i,j)\in\mc A$ and $k\in\mc S$. The continuous-time dual ascent dynamics for the optimization problem \eqref{conv-min} is then given by \eqref{compact} where the flow variables $F=F(x)$ and $w=w(x)$ are given by \eqref{Fw} and \eqref{optFw}. Observe that, if $x_i=0$ and $x_j\ge0$ for some $(i,j)\in\mc A$, then $x_i-x_j\le0\le\psi_{ij}'(0)$ so that $F_{ij}=0$. Similarly, $x_k=0$ for $k\in\mc S$ implies that $x_k\le\psi_{k}'(0)$, so that $w_k=0$. Hence, the sufficient condition \eqref{eq:orthant-invariance} for invariance of nonnegative orthant $\R_+^n$ is satisfied, so that \eqref{compact}, \eqref{Fw}, and \eqref{optFw} actually define a dynamical flow network. To verify that it is a monotone dynamical flow network, simply observe that $$\frac{\partial}{\partial x_j}F_{ji}\ge0\,,\qquad\frac{\partial}{\partial x_j}F_{ij}\le0\,,\qquad\frac{\partial}{\partial x_j}F_{ik}=\frac{\partial}{\partial x_j}F_{ki}=0\,,\qquad\frac{\partial}{\partial x_j}w_j\ge0\,, \qquad\frac{\partial}{\partial x_j}w_k=0\,,$$ for all choices of three different cells $i,j,k\in\mc I$. Hence, for $f(x)=F^T(x)\1-F(x)\1-w(x)$ $$\frac{\partial}{\partial x_j}f_{i}(x)=\frac{\partial}{\partial x_j}F_{ji}-\frac{\partial}{\partial x_j}F_{ij}\ge0\,,\qquad\sum_k\frac{\partial}{\partial x_j}w_{k}=\frac{\partial}{\partial x_j}w_{j}\ge0\,,$$ so that \eqref{eq:monotone} is satisfied. Therefore, the dual ascent dynamics defined by \eqref{compact}, \eqref{Fw}, and \eqref{optFw} is a monotone dynamical flow network on its whole state space $\R_+^n$. 
\end{example}

In the following theorem we gather a few stability properties that are enjoyed by monotone dynamical flow networks. For the sake of simplicity, we state the results for dynamical flow networks that are monotone on the whole state space, whereas they can be adapted to the cases where the monotonicity property is satisfied in a subdomain of the state space only. 

\begin{theorem}\label{theo:stability-monotone}
For a monotone dynamical flow network $$\dot x=u+F^T(x)\1-F(x)\1-w(x)\,,$$ 
\begin{enumerate}
\item[(i)] the partial order between external inflows and initial states is preserved by the state at any time; 
\item[(ii)] the $l_1$-distance between any two solutions is never increasing.  
\end{enumerate} 
Moreover, if the external inflows $u$ are constant in time, then: 
\begin{enumerate}
\item[(iii)] every equilibrium $x^*$ is stable; 
\item[(iv)] an equilibrium $x^*$ is globally asymptotically stable if and only if it is locally asymptotically stable; 
\item[(v)] an equilibrium $x^*$ is asymptotically stable if the compartmental matrix $(\nabla f(x^*))^T$ is outflow-connected; 
\item[(vi)] the trajectory started from $x(0)=0$ is always entrywise monotonically nondecreasing in time, hence convergent to a (possibly infinite) limit. 
\end{enumerate} 
\end{theorem}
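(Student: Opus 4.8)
The plan is to dispatch the six claims in order of increasing difficulty, reusing Proposition \ref{prop:1} and the order structure as much as possible; I write $\Phi_t$ for the flow of the autonomous system obtained when $u$ is held constant. For (i), I would invoke Kamke's theorem directly: condition \eqref{eq:monotone} asks that $(\nabla f(x))^T$ be compartmental, hence in particular Metzler, and a matrix is Metzler if and only if its transpose is; since also $\nabla_u(u+f)=I\ge0$, the system $\dot x=u+f(x)$ is cooperative and therefore order-preserving jointly in the initial state and the input. For (ii), given two solutions let $y=x-\tilde x$ and write, by the fundamental theorem of calculus, $f(x)-f(\tilde x)=A(t)\,y$ with $A(t)=\int_0^1\nabla f(\tilde x+sy)\,\de s$; since compartmental matrices form a convex cone closed under averaging, $B=A(t)^T$ is compartmental, and the sign-partition computation \eqref{sum}--\eqref{l1} of Proposition \ref{prop:1} applied to $-B$ in the role of $L$ gives $\frac{\de}{\de t}\|y\|_1=y^TB\,\sgn(y)\le0$ (modulo the standard null-set care at $y_i=0$ and along the segment). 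Claim (iii) is then immediate by taking $\tilde x\equiv x^*$ in (ii). Claim (vi) follows from orthant invariance \eqref{eq:orthant-invariance} and (i): since $\Phi_h(0)\ge0=\Phi_0(0)$, applying the order-preserving flow gives $\Phi_{t+h}(0)\ge\Phi_t(0)$ for all $t,h\ge0$, so $t\mapsto\Phi_t(0)$ is entrywise non-decreasing and converges in $[0,+\infty]^n$.

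The core is (iv); the forward implication is trivial, so I assume $x^*$ is locally asymptotically stable. Two structural facts come first. By (ii) every trajectory stays in a fixed $l_1$-ball about $x^*$, so all trajectories are bounded and $\omega$-limit sets are nonempty, compact and invariant. Next I would show the equilibrium set is a \emph{lattice}: if $a,b$ are equilibria then $m=a\wedge b$ satisfies $\Phi_t(m)\le\Phi_t(a)=a$ and $\Phi_t(m)\le b$, hence $\Phi_t(m)\le m$, so $\|\Phi_t(m)-m\|_1=\1^T(m-\Phi_t(m))$ is nonnegative, non-increasing by (ii), and zero at $t=0$, forcing $\Phi_t(m)\equiv m$; symmetrically $a\vee b$ is an equilibrium.

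I would then prove uniqueness. Suppose $x^{**}\ne x^*$ is another equilibrium; then $m=x^*\wedge x^{**}$ and $M=x^*\vee x^{**}$ are equilibria and at least one differs from $x^*$. If $m<x^*$, choose $q=x^*-\eta(x^*-m)$ with $\eta>0$ small enough that $q$ lies in the basin of $x^*$; then $m\le q\le x^*$, so the trajectory is trapped in $[m,x^*]$ and both $\|\Phi_t(q)-x^*\|_1=\1^T(x^*-\Phi_t(q))$ and $\|\Phi_t(q)-m\|_1=\1^T(\Phi_t(q)-m)$ are non-increasing with constant sum $\1^T(x^*-m)$, hence both constant; but local asymptotic stability forces $\Phi_t(q)\to x^*$, so $\|\Phi_t(q)-x^*\|_1\to0$, contradicting its being the positive constant $\eta\,\1^T(x^*-m)$. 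The case $M>x^*$ is dual, so every equilibrium equals $x^*$. Global attraction then follows by squeezing: by (vi) and boundedness $\Phi_t(0)$ increases to the least equilibrium, now $x^*$, and since $\Phi_t(x_0)\ge\Phi_t(0)$ every $\omega$-limit point $p$ satisfies $p\ge x^*$, so $\Omega:=\omega(x_0)\subseteq\{x\ge x^*\}$; letting $Q=\sup\Omega\ge x^*$, invariance gives $\Phi_t(Q)\ge Q$, hence $\Phi_t(Q)$ is non-decreasing and bounded, converging to an equilibrium $\ge Q$, which by uniqueness is $x^*$, forcing $Q=x^*$ and $\Omega=\{x^*\}$. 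With (iii) this yields global asymptotic stability.

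Finally, for (v) I would linearize: the linearization at $x^*$ is $\dot y=\nabla f(x^*)y$ (assuming $f$ differentiable at $x^*$, as implicit in writing $\nabla f(x^*)$), and since $(\nabla f(x^*))^T$ is compartmental and outflow-connected it is Hurwitz by Proposition \ref{prop:1}; thus $\nabla f(x^*)$ is Hurwitz, $x^*$ is locally asymptotically stable by the indirect method, and (iv) upgrades this to (global) asymptotic stability. The main obstacle throughout is (iv): one must orchestrate order-preservation and $l_1$ non-expansion \emph{together} --- the lattice property, the interval-trapping contradiction, and the $\sup$-of-limit-set argument --- since neither monotonicity nor non-expansion alone suffices to exclude a second equilibrium or a nontrivial $\omega$-limit set.
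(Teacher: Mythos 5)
Your overall architecture matches the paper's where the paper gives arguments, and is genuinely more self-contained where it does not. The paper proves (i), (iii), (v), (vi) exactly as you do --- Kamke's theorem (with the input generalization) for (i), specializing (ii) to an equilibrium for (iii), linearization plus Proposition~\ref{prop:1} for (v), and the semiflow/ordering argument $\Phi_{t+h}(0)=\Phi_t(\Phi_h(0))\ge\Phi_t(0)$ for (vi) --- but it outsources (ii) to \cite[Lemma 1]{Como.Lovisari.ea:TCNS15} and (iv) to \cite[Lemma 6]{Lovisari.Como.ea:CDC14} without reproducing the proofs. Your integral mean-value proof of (ii) (compartmentality is a convex cone, so the averaged Jacobian along the segment is compartmental, then rerun the sign-partition computation of Proposition~\ref{prop:1}) and your lattice/trapping/$\sup$-of-$\omega$-limit proof of (iv) --- in particular the observation that two non-increasing $l_1$-distances with constant sum must both be constant, contradicting local attraction --- are correct self-contained substitutes that use precisely the interplay of order-preservation and $l_1$ non-expansion the cited lemmas rely on.

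One step needs repair. In the lattice lemma you justify $\Phi_t(m)\equiv m$ for $m=a\wedge b$ by asserting that $\|\Phi_t(m)-m\|_1$ is ``non-increasing by (ii)''. That is not licensed: at that point $m$ is not yet known to be an equilibrium, so the constant function $m$ is not a solution and (ii) does not apply to the pair $(\Phi_t(m),m)$. Worse, since $\Phi_t(m)\le m$ for all $t$ implies, by applying $\Phi_s$ and the semigroup property, that $t\mapsto\Phi_t(m)$ is entrywise non-increasing, the quantity $\1^T(m-\Phi_t(m))$ is in fact non-\emph{de}creasing, so the stated reasoning cannot close. The fix uses the same two ingredients, aimed at $a$ instead of $m$: by (ii) applied to the solutions started at $m$ and at the equilibrium $a$, one has $\|\Phi_t(m)-a\|_1\le\|m-a\|_1$, while the order relations $\Phi_t(m)\le m\le a$ give $\|\Phi_t(m)-a\|_1=\1^T(a-m)+\1^T\bigl(m-\Phi_t(m)\bigr)=\|m-a\|_1+\|\Phi_t(m)-m\|_1$; combining forces $\|\Phi_t(m)-m\|_1=0$, i.e.\ $\Phi_t(m)\equiv m$, and dually for $a\vee b$. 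With this patch the rest of your (iv) --- uniqueness via the trapped trajectory, then $\Phi_t(0)\nearrow x^*$, then $Q=\sup\omega(x_0)$ satisfying $\Phi_t(Q)\ge Q$ by invariance and monotonicity, hence converging upward to the unique equilibrium --- goes through. The remaining caveats you already flag (Dini-derivative/null-set care in (ii), differentiability of $f$ at $x^*$ in (v)) are equally implicit in the paper.
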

\begin{proof} Claim (i) follows from Kamke's theorem \cite{KamkeAM32} and its generalization to systems with input \cite{AngeliTAC03}. 
Claim (ii) is a generalization of the first claim of Proposition \ref{prop:1} and follows from \cite[Lemma 1]{Como.Lovisari.ea:TCNS15}. Claim (iii) is an immediate consequence of (ii): when one of the two trajectories is an equilibrium $x^*$, the $l_1$-distance $||x(t)-x^*||_1$ is non-increasing in time, so that $x^*$ is necessarily stable.  Claim (iii) is also a consequence of (ii) as proved in \cite[Lemma 6]{Lovisari.Como.ea:CDC14}. Claim (v) is a consequence of the Lyapunov linearization method and Proposition \ref{prop:1}. Finally, to prove claim (vi), let $\Phi^{(t)}(x)$ denote the associated semi-flow and let $x(t)$ be the solution started with $x(0)=0$: then, $x(t+s)=\Phi^{(t)}(x(s))\ge\Phi^{(t)}(0)=\Phi^{(t)}(x(0))=x(t)$, where the inequality follows from the fact that $x(s)\ge0$ (since the nonnegative orthant is dynamically invariant) and system monotonicity. \qed
\end{proof}\medskip

Quite remarkably, Theorem \ref{theo:stability-monotone} relies solely on structural properties of the dynamical system, as it applies to monotone dynamical flow networks irrespective of their specific form. In particular, the non-expansiveness of the $l_1$-distance is an extension of the analogous result for affine dynamical flow networks (Proposition \ref{prop:1}). The related issue of existence of sum-separable Lyapunov functions for stable monotone systems has recently attracted significant attention \cite{Dirr.ea:2015}. Non-expansiveness  and contractivity of properly selected metrics, also known as differential stability, in dynamical systems is an even stronger property than Lyapunov stability and has been investigated, e.g., in \cite{Lohmiller.Slotine:1998,Sontag:10}. Finally, monotonicity and $l_1$-contraction for dynamical flow networks have analogies with properties of some PDE models of dynamical flows, see, e.g., Kruzkov's Theorem \cite[Proposition 2.3.6]{Serre:99} for entropy solutions of scalar conservation laws. 

As a first illustration of the applicability range of Theorem \ref{theo:stability-monotone}, we prove that the dual ascent dynamics introduced in Example \ref{ex6} have a globally asymptotically stable equilibrium that corresponds to a minimizer  of the convex network flow optimization \eqref{conv-min}. 
\addtocounter{example}{-1}
\begin{example}\textbf{(continued)}
Consider the dual ascent dynamics defined by \eqref{compact}, \eqref{Fw}, \eqref{optFw}, with initial state $x(0)=0$. Since this is a monotone dynamical flow network, by point (vi) of Theorem \ref{theo:stability-monotone}, $x_i(t)$ converges to a possibly infinite value $x_i^*$, for all cells $i\in\mc I$. We now prove that $x^*_i<+\infty$ for all $i\in\mc I$. To see this, let $$\mc J=\{i\in\mc I:\,x_i^*=+\infty\}\,,\qquad\mc K=\{i\in\mc I:\,x_i^*<+\infty\}\,,$$ 
and observe that the limit flows $$F^*=\lim\limits_{x\to x^*}F\,,\qquad w^*=\lim\limits_{x\to x^*}w$$ satisfy
\be\label{F*}F_{jk}^*=+\infty\,,\quad j\in\mc J,k\in\mc K:(j,k)\in\mc A\,,\qquad 
F_{kj}^*=0\,,\quad j\in\mc J,k\in\mc K\,,\qquad 
w_h^*=+\infty \,,\quad h\in\mc S\cap\mc J\,.\ee
Notice that, if $\mc J$ contains at least one outflow-connected cell, then either $\mc J\cap\mc S$ is nonempty, or there must exist some $j\in\mc J$ and $k\in\mc K$ such that $(j,k)\in\mc A$. It then follows from \eqref{F*} that
$$\lim_{x\to x^*}\sum_{j\in\mc J}\dot x_j=\sum_{j\in\mc J}u_j+\sum_{j\in\mc J}\sum_{k\in\mc K}F^*_{kj}-\sum_{j\in\mc J}\sum_{k\in\mc K}F^*_{jk}-\sum_{j\in\mc J}w_j^*=-\infty\,,$$ 
which contradicts the fact $\lim_{x\to x^*}\sum_{j\in\mc J}x_j=+\infty$. Since the network is outflow-connected, then necessarily $\mc J$ is empty so that $x^*$ is finite, and hence $x^*$ is an equilibrium of \eqref{compact}, \eqref{Fw}, \eqref{optFw}. Hence $x^*$, $F^*$, and $w^*$ satisfy the mass conservation constraint $u+(F^*)^T\1=F^*\1+w^*$ as well as \eqref{optFw}, so that $F^*$, and $w^*$ are minimizers in the static network flow optimization \eqref{conv-min}. Observe that the equilibrium $x^*$ is necessarily stable by point (iii) of Theorem \ref{theo:stability-monotone}. To prove its global 
asymptotic stability one can build on the strict convexity of the cost functions and inflow- and outflow-connectivity of the network topology in order to prove that the compartmental matrix $(\nabla f(x^*))^T$ is outflow-connected and then use points (v) and (iii) of Theorem \ref{theo:stability-monotone}. We observe that dynamical flow network structures similar to the ones considered in this example can be found in the context of dissipative flow networks \cite{Zlotnik.ea:2015}.
\end{example}
The arguments developed in Example \ref{ex6} can be extended to cover the case of finite capacities constraints $F_{ij}<c_{ij}$ on the cell-to-cell flows for  $(i,j)\in\mc A$ and  $w_k<c_k$ on the external outflows for $k\in\mc S$, by allowing the considered convex costs to be extended-real-valued with $\psi_{ij}(y)=+\infty$ if and only if $y\ge c_{ij}$ and $\psi_{k}(y)=+\infty$ if and only if $y\ge c_{k}$. We do not include details on such extention here, as we will deal with finite capacities bounds $z_i\le C_i$ on the total outflows from the different cells within a different setting  in the next sections. 

\section{Nonlinear dynamical flow networks with demand and supply constraints}\label{sec:demand-supply}

Whilst linear relations between the flow variables $F$ and $w$ and the state $x$ studied in Section \ref{sec:linear} fail to capture congestion effects in dynamical flow networks, such effects can be modeled by nonlinearities. A class of such nonlinearities has already been considered in Example \ref{ex6} of Section \ref{sec:nonlinear-stability} within the framework of dual ascent algorithms in network flow optimization. In this section, we introduce another class of such nonlinear dynamical flow networks whereby the outflows from, and the inflows in, the cells are bounded by demand and, respectively, supply functions of their current state. Such class of models is particularly relevant for road traffic flow networks (where the demand and supply functions represent, respectively, the rising and decreasing parts of the fundamental diagram \cite{LighthillTrafficPTRS55,Richards:56}), as well as production networks (where demand functions represent clearing functions \cite{Karmarkar:89}), and data networks (where demand functions can be understood as representing communication link throughput). We present five examples of these models, show that they all fit within the framework of monotone dynamical flow networks either globally, or in a relevant subdomain of their state space, and finally use Theorem \ref{theo:stability-monotone} to establish their stability properties.

\begin{figure}
\begin{center}
\includegraphics[width=7.5cm]{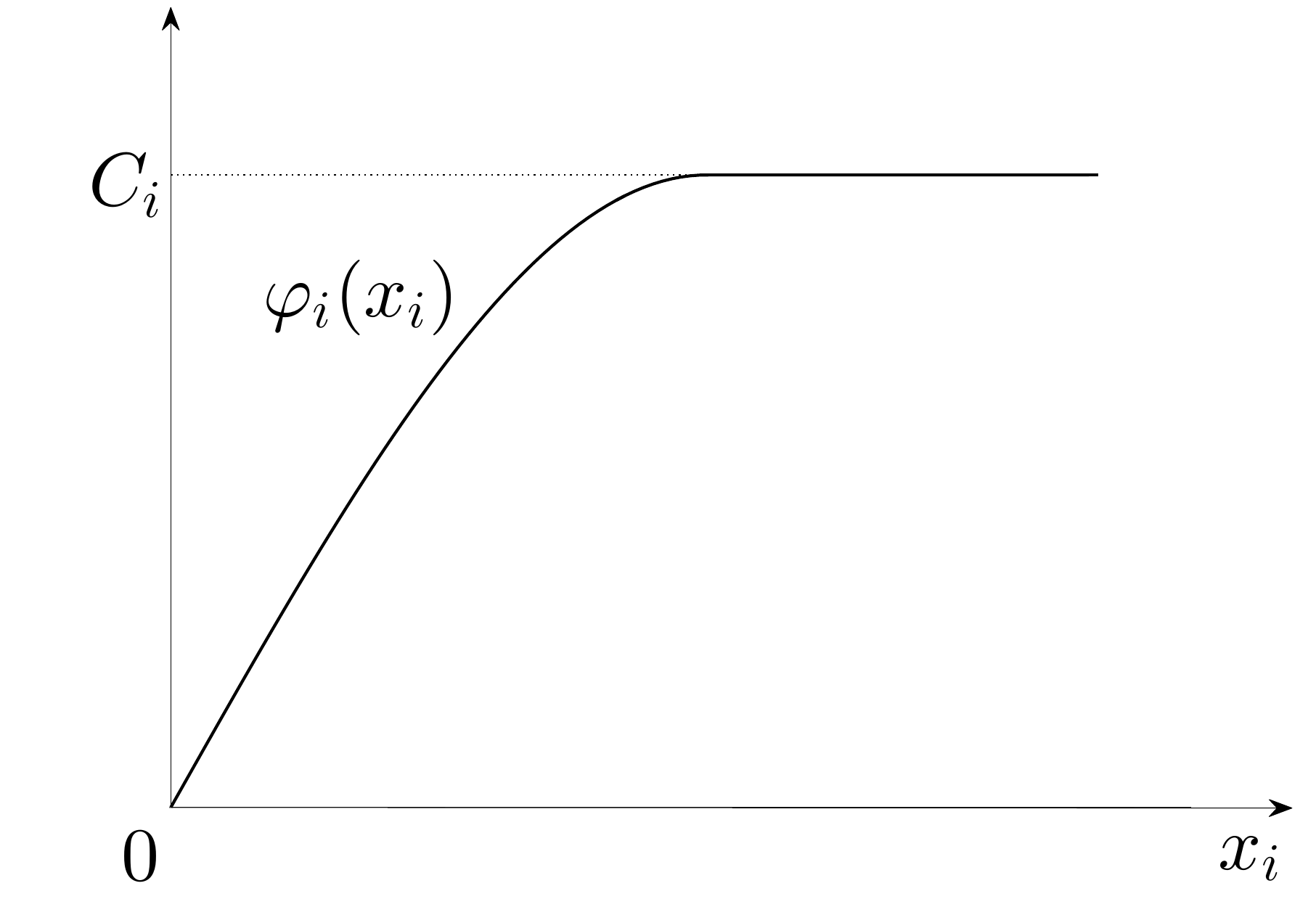}
\hspace{1cm}
\includegraphics[width=7.5cm]{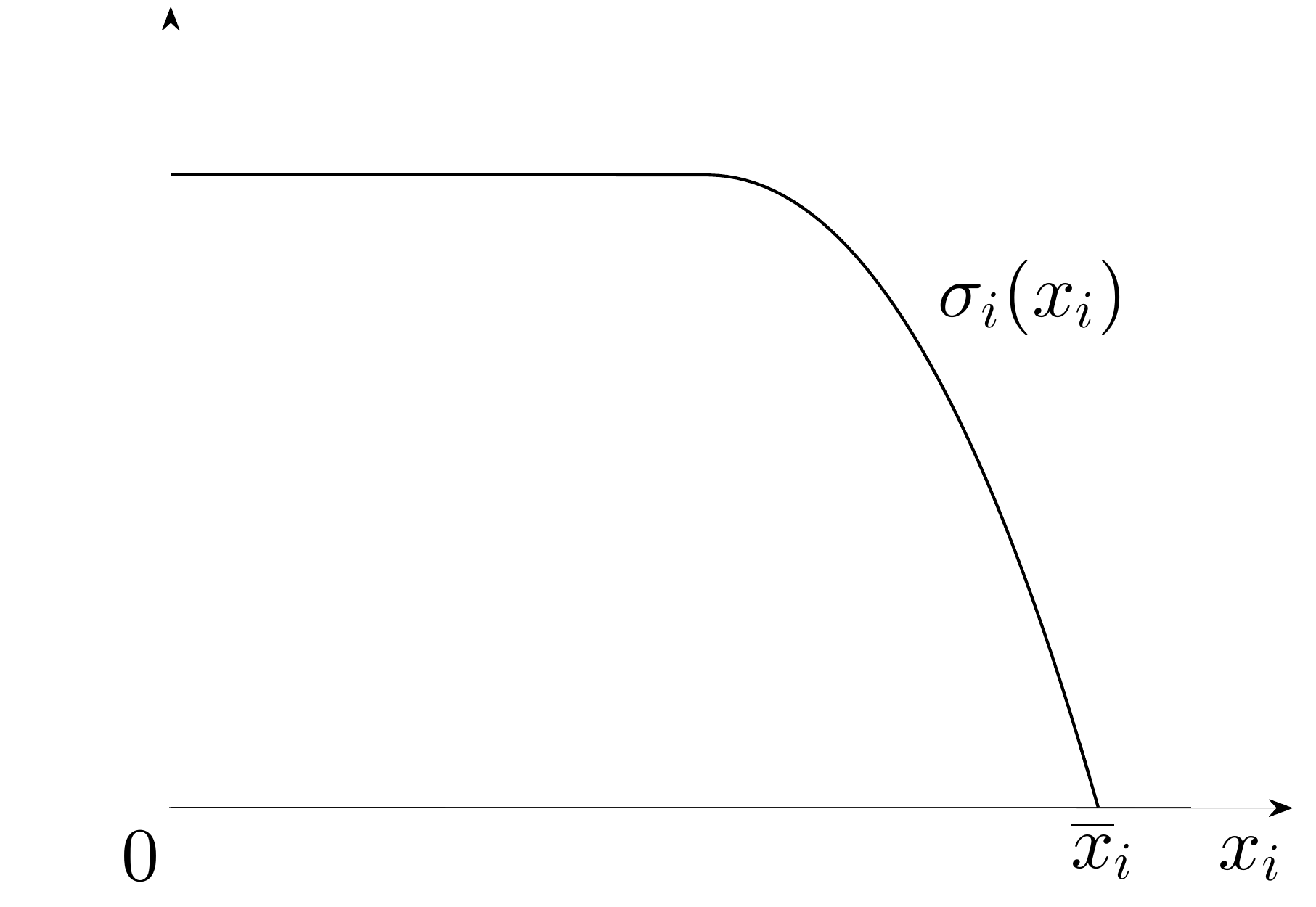}
 \end{center}
\caption{\label{fig:demand-supply} A demand function (on the left) and a supply function (on the right).} 
\end{figure}

We start by introducing \emph{demand functions} $\varphi_i(x_i)$ that characterize the maximum possible outflow $z_i$ from the cells $i\in\mc I$ given the current mass $x_i$, so that $0\le z_i\le\varphi_i(x_i)$. Throughout this paper, such demand functions $\varphi_i(x_i)$ are assumed to be Lipschitz-continuous, non-decreasing (the more mass, the more demand of outflow from the cell), concave (the more mass the less outflow increment), with $\varphi_i(0)=0$ (no outflow from cell $i$), and such that the cell \emph{flow capacity} $$C_i:=\sup_{x_i\ge0}\varphi_i(x_i)$$ is strictly positive and finite for every cell $i$ (see the leftmost plot in Figure \ref{fig:demand-supply}). We stack up the demand functions in an $n$-dimensional vector $\varphi(x)=(\varphi_i(x_i))_{i\in\mc I}$ and compactly write the demand constraints as 
\be\label{eq:demand-constraint}0\le z\le \varphi(x)\,.\ee

We may equivalently rewrite the demand constraint \eqref{eq:demand-constraint} by introducing a vector $\gamma\in\R^n$ whose entries $\gamma_i\in[0,1]$ for $i\in\mc I$ are flow control parameters, and put $$z=\Gamma\varphi(x)\,,\qquad\Gamma=\diag(\gamma)\,.$$ With this notation, equation \eqref{eq:dynRz} for nonlinear dynamical flow networks with demand constraints reads 
\be\label{eq:RGamma}\dot x=u-(I-R^T)\Gamma\varphi(x)\,.\ee
In the above, the external inflow vector $u$, the substochastic routing matrix $R$, and the flow control parameters matrix $\Gamma$ might be either constant or time-varying, in the latter case depending on either exogenous factors or feedback information. Recall that the routing matrix is always substochastic and needs to satisfy the network topology constraints $R_{ij}=0$ for all $(i,j)\notin\mc A$ and $\sum_{j\in\mc I}R_{ij}=1$ for all cells $i\notin\mc S$. Observe that invariance of the nonnegative orthant for the dynamics \eqref{eq:RGamma} is implied by the demand constraint \eqref{eq:demand-constraint} and the fact that $\varphi_i(0)=0$ so that there can be no outflow from an empty cell.  We gather below a few examples discussed in the literature that fit in this framework.
 
\begin{example}\label{ex1}
The simplest instance of a nonlinear dynamical flow network with demand constraints is in the form 
\be\label{eq:R}\dot x=u-(I-R^T)\varphi(x)\,,\ee with constant external inflow vector $u\in\R_+^n$ and constant routing matrix $R\in\R_+^{n\times n}$. Here, the demand constraints \eqref{eq:demand-constraint} are met with equality, i.e., $z=\varphi(x)$, so that $\Gamma=I$. The 
network topology is $\mc T=(\mc I,\mc A,\mc R,\mc S)$, where $$\mc I=\{1,\ldots,n\}\,,\qquad \mc A=\{(i,j):\,R_{ij}>0\}\,,\qquad \mc R=\{i\in\mc I:\,u_i>0\}\,,\qquad\mc S=\left\{i\in\mc I:\,\sum\nolimits_jR_{ij}<1\right\}\,.$$ Provided that $\mc T$ is outflow-connected, it can be easily verified that a necessary and sufficient condition for the existence of an equilibrium of the dynamics \eqref{eq:R}
is that the vector $$z^*=(I-R^T)^{-1}u$$ satisfies \be\label{eq:z<C}z_i^*<C_i\,,\qquad i\in\mc I\,.\ee In fact, if this condition is met, and the demand functions are strictly increasing, then there exists a unique equilibrium \be\label{eq:x*}x^*=\varphi^{-1}(z^*)\,.\ee 
On the other hand, it is not hard to verify that \eqref{eq:R} is a monotone dynamical flow network, since 
$$\frac{\partial }{\partial x_k}F_{ki}(x)=R_{ki}\varphi'_k(x_k)\ge0\,,\qquad \frac{\partial }{\partial x_j}F_{ik}(x)=0\,,\qquad 
i,j,k\in\mc I\,,\qquad j\ne i$$
$$ \frac{\partial }{\partial x_j}w_i(x)=0\,,\qquad \frac{\partial }{\partial x_i}w_i(x)=\left(1-\sum\nolimits_kR_{ik}\right)\varphi'_i(x_i)\ge0\,,\qquad 
i,j\in\mc I\,,\qquad j\ne i\,.$$
Therefore, when \eqref{eq:z<C} holds true, the equilibrium \eqref{eq:x*} is stable by point (iii) of Theorem \ref{theo:stability-monotone}, while its 
global asymptotic stability then follows from points (iv) and (v) of Theorem \ref{theo:stability-monotone}.
\end{example}
\begin{example}\label{ex2} In the works \cite{ComoPartITAC13,ComoPartIITAC13}, nonlinear dynamical flow networks are considered of the form 
\be\label{eq:Rx}\dot x=u-(I-R^T(x))\varphi(x)\,,\ee
with constant external inflow vector $u$ and increasing demand functions $\varphi(x)$. As in \eqref{eq:R} of Example \ref{ex1}, the demand constraints met with equality, i.e., $z=\varphi(x)$. The main difference with respect to \eqref{eq:R} is the \emph{locally responsive feedback} routing matrix $R(x)$. By locally responsive feedback it is meant that: 
\begin{enumerate}
\item[(a)] the split ratios $R_{ij}(x)$ depend on local information only, consisting of the mass $x_k$ in the cells $k\in\mc E_i$, where $\mc E_i$ is the outneighborhood of cell $i$ in the network topology; 
\item[(b)] the dependence on such local information is monotone in the sense that 
\be\frac{\partial }{\partial x_k}\sum_{h\in\mc E_i}R_{ih}(x)\le0\,,\qquad\frac{\partial }{\partial x_k}R_{ij}(x)\ge0\,,\qquad i\in\mc I\,,\qquad j,k\in\mc E_i\,,\quad  j\ne k\,;\ee
\item[(c)] if $x_j\to+\infty$ for some but not all $j\in\mc E_i$, or if $i\in\mc S$ and $x_j\to+\infty$ for all $j\in\mc E_i$, then
\be\lim R_{ij}(x)=0\,.\ee
\end{enumerate}
Property (a) above guarantees that the routing policy $R(x)$ is decentralized, in the sense that it depends on local information only. This in particular implies that the network flow dynamics \eqref{eq:Rx} is distributed as the righthand side of $$\dot x_i=u_i+\sum\nolimits_{j}R_{ji}(x)\varphi_j(x_j)-\varphi_i(x_i)$$ depends only on the external inflow $u_i$ and the states of cell $i$, of its immediately upstream cells (i.e., such $j$ such that $i\in\mc E_j$), as well as their immediately downstream cells (i.e., those cells $k$ such that $i,k\in\mc E_j$ for the same $j$). Property (b) is a relatively natural monotonicity condition: it states that, if the mass in an out-neighbor cell $k\in\mc E_i$ increases, then the fraction $R_{ij}(x)$ of outflow from cell $i$ that is routed towards any other out-neighbor cell $j\in\mc E_i\setminus\{k\}$  cannot decrease and, if $i\in\mc S$, the same holds true for the fraction $1-\sum_{h\in\mc E_i}R_{ih}(x)$  that is routed towards the external environment directly: as a consequence, this implies that the fraction routed towards cell $k$ itself cannot increase. Finally, property (c) ensures that, if the mass $x_j$ in a cell $j\in\mc E_i$ grows large, and either $i\in\mc S$ (so that flow can be routed directly to the external environment) or there are other cells $k\in\mc E_i$ where the mass stays bounded, then the fraction $R_{ij}(x)$ of flow routed from $i$ to $j$ vanishes. 
Specific forms of such locally responsive feedback routing policies include the $i$-logit function
\be\label{loallyrespR}R_{ij}(x)=\frac{\ds e^{\alpha_j-\beta_j x_j}}{\ds\summ_{k\in\mc E_i}e^{\alpha_k-\beta_k x_k}+\chi_{\mc S}(i)}\,,\qquad j\in\mc E_i\,,\ee
where $\alpha,\beta\in\R^n$ are vectors such that $\beta\ge0$, and $\chi_{\mc S}$ is the indicator function of $\mc S$ in $\mc I$, i.e., $\chi_{\mc S}(i)=1$ if $i\in\mc S$ and $\chi_{\mc S}(i)=0$ if $i\notin\mc S$. As shown in \cite{ComoPartITAC13}, the dynamical flow network \eqref{eq:Rx} with locally responsive feedback routing $R(x)$ is globally monotone on $\R_+^n$. Points (iv) and (ii) of Theorem \ref{theo:stability-monotone} then imply a dichotomy: either the dynamical flow network \eqref{eq:Rx} admits a stable equilibrium $x^*$ (that is the limit of the solution started in $x(0)=0$) or it is unstable, i.e., all its solutions grow unbounded in time. Globally asymptotic stability of $x^*$ can then be established in the latter case (c.f.~\cite{ComoPartITAC13}). 
\end{example}
\begin{example}\label{ex3} In the work \cite{Como.Lovisari.ea:TCNS15}, routing policies such as those in Example \ref{ex2} are coupled with flow control policies whose role consists in reducing the flow towards downstream cells when  these are more congested than the upstream ones. Specifically, locally responsive feedback routing policies $R(x)$ satisfying properties (a), (b), and (c) of Example \ref{ex2} are considered together with flow controls $\Gamma(x)=\diag(\gamma(x))$ with the following properties: 
\begin{enumerate}
\item[(d)] each $\gamma_i(x)$ depends on local information only, consisting of the mass $x_k$ in the cells $k\in\mc E_i\cup\{i\}$; 
\item[(e)] the dependence on such local information is monotone in the sense that  
$$\frac{\partial}{\partial x_j}\gamma_i(x)\le0\,,\qquad\frac{\partial}{\partial x_k}\left(\gamma_i(x)R_{ij}(x)\right)\ge0\,,\qquad j,k\in\mc E_i\,,\quad j\ne k;$$
\item[(f)] if $x_j\to+\infty$ for all $j\in\mc E_i$ and $x_i$ remains bounded, then
$\lim \gamma_{i}(x)=0$; 
\item[(g)] if $x_i\to+\infty$ and $x_j$ remains bounded for some $j\in\mc E_i$, then
$\lim \gamma_{i}(x)=1$. 
\end{enumerate}
Specific forms include the $i$-logit routing policies \eqref{loallyrespR} coupled with the locally responsive flow controls 
\be\label{responsivegamma}\gamma_i(x)=\frac{\ds\summ_{k\in\mc E_i}e^{\alpha_k-\beta_k x_k}+\chi_{\mc S}(i)}{\ds e^{\alpha_i-\beta_i x_i}+\summ_{k\in\mc E_i}e^{\alpha_k-\beta_k x_k}+\chi_{\mc S}(i)}\,.\ee
As shown in \cite{Como.Lovisari.ea:TCNS15}, conditions (ii) and (v), along with the fact that the demand functions $\varphi(x)$ are nondecreasing imply that \eqref{eq:RGamma} is a monotone dynamical flow network. Using arguments analogous to those summarized in Theorem \ref{theo:stability-monotone}, a dichotomy similar to that mentioned in Example \ref{ex2} is then proved in \cite{Como.Lovisari.ea:TCNS15}:  either all solutions of \eqref{eq:RGamma} grow unbounded in time, or they all converge to a globally asymptotically stable equilibrium. 
\end{example}
Dynamical flow network models with demand constraints on the outflows such as the ones listed above do capture some congestion effects, however they do not account for spillbacks and upstream propagation of perturbations. These can be brought into the model through the introduction of supply functions $\sigma_i(x_i)$ that limit the maximum inflow $u_i+\sum_jf_{ji}$ possible in cells $i\in\mc I$ given their current mass $x_i$. Throughout the paper, the supply functions $\sigma_i(x_i)$ are assumed to be Lipschitz continuous, non-increasing (the more mass, the less  inflow can be accommodated in the cell), and concave (the more mass the more inflow decrement). We also denote by $$\ov x_i=\sup\{x_i\ge0:\,\sigma_i(x_i)>0\}$$ the (possibly infinite) \emph{buffer capacity} of cell $i$. (See the rightmost plot in Figure \ref{fig:demand-supply}.) Upon stacking up the supply functions in an $n$-dimensional vector $\sigma(x)=(\sigma_i(x_i))_{i\in\mc I}$ the supply constraints can be written in the compact form 
\be\label{eq:supply-constraint}u+R^Tz\le \sigma(x)\,.\ee 
Dynamical flow network models with both demand and supply constraints were  introduced by Daganzo \cite{Daganzo:94,Daganzo:95}, as discretizations of the Lighthill-Whitham-Richards PDE model of road traffic flow \cite{LighthillTrafficPTRS55,Richards:56}.  Observe that the supply constraint \eqref{eq:supply-constraint} together with the demand constraint \eqref{eq:demand-constraint} imply that the set $$\mc X=\{x\in\R^n:\,0\le x_i\le\ov x_i\,,\ \forall i\in\mc I\}$$ is invariant for the network flow dynamics \eqref{eq:dynRz}. 

\begin{example}\label{ex4} In the continuous-time version of Daganzo's cell transmission model studied, e.g., in \cite{Jabari.Liu:12,Grandinetti.ea:ECC15,Coogan.Arcak:TAC15,Coogan.Arcak:Automatica16,Como.ea:TRB16}, the network flow dynamics read
\be\label{eq:Gx}\dot x=u-(I-R^T)\Gamma(x)\varphi(x)\,,\ee
where the inflow vector $u$ and the routing matrix $R$ are exogenous and possibly time-varying, while $\Gamma(x)=\diag(\gamma_i(x))$ with the flow control parameters 
\be\label{FIFO}\gamma_i(x)=\sup\left\{\alpha\in[0,1]:\,\max_{k\in\mc E_i}\left(\alpha\sum_{h\in\mc I}R_{hk}\varphi_h(x_h)-\sigma_k(x_k)\right)\le0\right\}\,,\qquad i\in\mc I\,.\ee
Such flow control parameters ensure that both supply and demand constraints are met and that a first-in first-out (FIFO) rule is followed at junctions so that, in particular, the outflow from any cell $i$ is split among its immediately downstream cells $j\in\mc E_i$ exactly according to the split ratios $R_{ij}$ in every circumstances. Observe that in the \emph{free-flow region} \be\label{freeflow}\mc F:=\{x\in\R_+^n:\,u+R^T\varphi(x)\le\sigma(x)\}\,,\ee one has that $\gamma_i(x)=1$ for every cell $i$, so that the dynamics \eqref{eq:Gx} coincide with \eqref{eq:R} of Example \ref{ex1}. This is not the case outside the free-flow region, where the supply constraints \eqref{eq:supply-constraint} make $\gamma_i(x)<1$ for at least one cell $i$. In fact, observe that the cell-transmission model \eqref{eq:Gx} with FIFO diverge  rule \eqref{FIFO} is a monotone dynamical flow network in a domain $\mc D$ that includes the free-flow region $\mc F$, where its dynamics coincide with the ones of Example \ref{ex1}, while it is not necessarily monotone on the whole state space $\R_+^n$, except for special network topologies (c.f.~\cite{Coogan.Arcak:Automatica16}). 
Provided that sufficient conditions for the existence of an equilibrium $x^*$ in the free-flow region analogous to those in Example \ref{ex1} are satisfied, local asymptotic stability can be established using  point (v) of Theorem \ref{theo:stability-monotone} (c.f., \cite{Coogan.Arcak:TAC15}). For special network topologies for which \eqref{eq:Gx}-\eqref{FIFO} is globally monotone on $\R_+^n$ (essentially, networks with no diverge junctions) global asymptotic stability can be deduced from point (iv) of Theorem \ref{theo:stability-monotone}. We refer the interested reader to \cite{Coogan.Arcak:Automatica16}  for stability results of the cell-transmission model with FIFO diverge rule on a certain class of network topologies (polytree networks) where the resulting  dynamics are not generally globally monotone. 
\end{example}

\begin{example}\label{ex5} A variant of Daganzo's cell transmission model with non-FIFO diverge rule was considered, e.g., in \cite{Lovisari.Como.ea:CDC14,Karafyllis.Papageorgiou:TCNS15}. 
Here, the dynamics is more easily expressed in the form \be\label{xFx}\dot x=u+F^T(x)\1-F(x)\1-w(x)\,,\ee
 where, for a given exogenous routing matrix $\ov R$, the flows are given by 
\be\label{FXX}F_{ij}(x)=\gamma_{ij}(x)\ov R_{ij}\varphi_i(x_i)\,,\qquad\gamma_{ij}(x)=\sup\left\{\xi\in[0,1]:\,\xi\cdot\sum_{h\in\mc I}\ov R_{hj}\varphi_h(x_h)\le\sigma_j(x_j)\right\}\,,\ee
and $w_i(x)=(1-\sum_j\ov R_{ij})\varphi_i(x_i)$.
Observe that, in the free-flow region \eqref{freeflow}, the dynamics above coincide with the ones of Examples \ref{ex1} and \ref{ex4}, while this is not the case elsewhere. In fact, it can be proved that \eqref{xFx}-\eqref{FXX} is a monotone dynamical flow network on the whole state space: from this, under certain sufficient conditions, global stability results can be deduced, as illustrated in \cite{Lovisari.Como.ea:CDC14}.\end{example}



 Interestingly, it has been pointed out that, while not necessarily monotone outside the free-flow region, the cell transmission model with FIFO diverge rule  \eqref{eq:Gx}-\eqref{FIFO} of Example \ref{ex4} enjoys a \emph{mixed monotonicity} property, i.e., the entries of its Jacobian have signs that are constant throughout the state space. In fact, such mixed monotonicity property carries over to versions of the cell-transmission model that combine FIFO with non-FIFO diverge rules as in Example \ref{ex6}. (C.f.~\cite{Coogan.Arcak:Automatica16,Coogan.ea:2016}.)

%
%
%




\section{Robustness of dynamical flow networks and their margin of resilience} 
\label{sec:resilience}

In the series of papers \cite{ComoPartITAC13,ComoPartIITAC13,Como.Lovisari.ea:TCNS15,Savla.ea:TNSE14}, robustness of nonlinear dynamical flow networks with decentralized routing and flow controls is analyzed. The main idea consists in studying the impact that perturbations of the external inflows and of the cell flow capacities  have on the dynamic behavior of the flow network. In order to quantitatively measure such  robustness, the notion of \emph{margin of resilience} is introduced. This is defined as the smallest magnitude of a perturbation that drives the dynamical flow network to instability, such magnitude being measured as the sum of the aggregate loss of cell flow capacities and the aggregate increase of external inflows. In this section, we briefly review some versions of these results.

\begin{figure}
\begin{center}
\includegraphics[width=7.5cm]{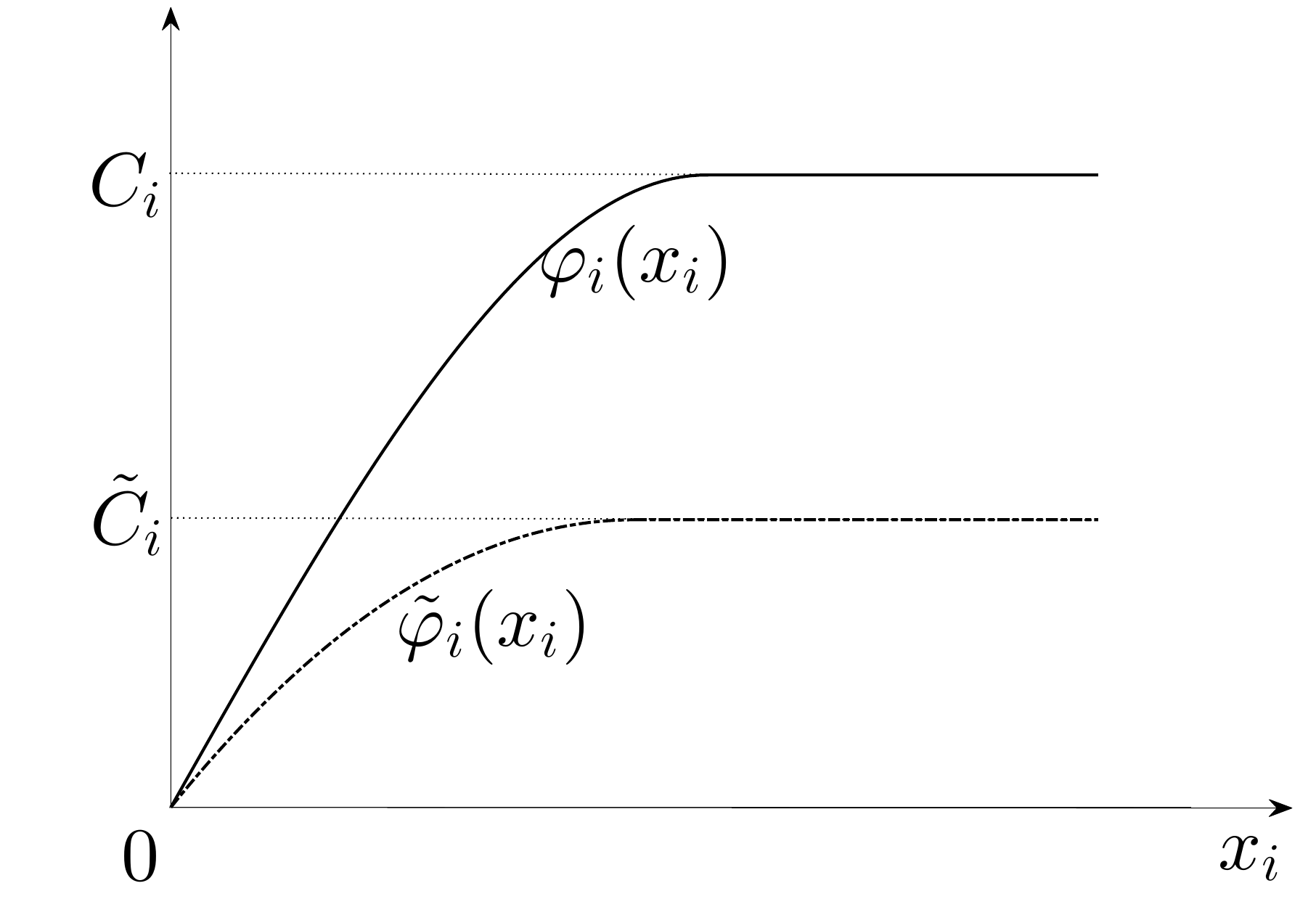}\end{center}
\caption{\label{fig:4} Considered perturbation affect the external inflow vector and the demand functions. Above a nominal demand function $\varphi_i(x_i)$ in a cell $i$ is plotted along with the perturbed demand function $\tilde\varphi_i(x_i)$ in the same cell. The magnitude of a perturbation is measured as in \eqref{delta}.} 
\end{figure}

In the interest of clarity, we will consider dynamical flow networks with constant external inflows and demand constraints only, while referring to the works \cite{ComoPartIITAC13,Como.Lovisari.ea:TCNS15,Savla.ea:TNSE14} for some results with finite buffer capacities that can be interpreted as special cases of supply constraints, as well as some discussion of extensions to time-varying external inflows. 
We start with a dynamical flow network
 \be\label{dynGR}\dot x=u-(I-R^T(x))\Gamma(x)\varphi(x)\,,\ee
 with topology $\mc T=(\mc I,\mc A,\mc R,\mc S)$, constant external inflows $u$,  demand functions vector $\varphi(x)$, routing matrix $R(x)$, and flow control  $\Gamma(x)=\diag(\gamma(x))$. 
We then consider both perturbations $\tilde u$ of the inflow vector $u$ and perturbations $\tilde\varphi(x)$ of the demand functions $\varphi(x)$ (see Figure \ref{fig:4}) and measure the \emph{magnitude} of such perturbations as 
\be\label{delta}\delta:=\sum_{i\in\mc R}|\tilde u_i-u_i|+\sum_{i\in\mc I}||\varphi_i(\,\cdot\,)-\tilde\varphi_i(\,\cdot\,)||_{\infty}\,.\ee
We then propose the following quantitative measure of robustness for dynamical flow networks. 
\begin{definition}\label{def:resilience}
The \emph{margin of resilience} $\nu$ of a dynamical flow network \eqref{dynGR}
with topology $\mc T=(\mc I,\mc A,\mc R,\mc S)$, constant external inflows $u$,  demand functions $\varphi(x)$, routing matrix $R(x)$, and flow control $\Gamma(x)$,  
is the infimum of the magnitudes $\delta$ of all perturbations $(\tilde u,\tilde\varphi(x))$ of $(u,\varphi(x))$ that make the perturbed dynamical flow network 
\be\label{pertnetwork}\dot x=\tilde u+(I-R^T(x))\Gamma(x)\tilde\varphi(x)\ee
unstable, i.e., such that the state $x(t)$ is unbounded in time for at least one initial condition $x(0)\in\R_+^n$. 
\end{definition}
A few comments are in order. First, Definition \ref{def:resilience} captures a \emph{worst case} notion of \emph{robustness}. In fact, one might think of the margin of resilience $\nu$ as the minimum effort required by a hypothetical adversary in order to modify the external inflows and demand functions so as to make the dynamical flow network unstable, provided that such effort is measured in terms of the perturbation magnitude \eqref{delta}. 

Second, notice from the righthand side of \eqref{pertnetwork} that the (feedback) routing $R(x)$ and flow control $\Gamma(x)$ are required to stabilize the system for any perturbation $(\tilde u,\tilde\varphi(x))$ affecting the external inflows and demand functions of magnitude smaller than the margin of resilience, without knowing the perturbation itself, but simply by measuring its effect on the state $x$. This is in line with classical robust control formulations. On the other hand, the considered notion of stability requires the state to remain bounded in time for every initial condition $x(0)$, not necessarily to converge to a pre-specified equilibrium. In fact, as we have seen in Section \ref{sec:demand-supply}, most of the considered routing and flow controls are such that the dynamical flow network is monotone and, when stable, it has a globally asymptotically stable equilibrium. Hence, in these cases, when the perturbed dynamical flow network remains stable, it admits a globally asymptotically stable equilibrium: the specific value of such equilibrium $\tilde x^*$ depends on the perturbation itself as well as on the routing and flow controls, but it does not affect the notion of margin of resilience.   

Observe that the margin of resilience depends on several factors: the network topology $\mc T$, the unperturbed demand functions $\varphi(x)$ (hence, in particular, the maximum flow capacities $C$), the external inflows $u$,  as well as the routing $R(x)$ and flow control $\Gamma(x)$. 
In fact, an upper bound on such margin of resilience can be obtained that depends only on the network topology, the maximum flow capacities, and the unperturbed external inflows: such upper bound holds true for every choice of routing and flow control policies. In order to state it, some further graph-theoretic notation is needed. For a subset of cells $\mc J\subseteq\mc I$, let 
$\mc K_{\mc J}\subseteq\mc I$ be the set of cells that either belong to $\mc J$ or are no longer outflow-connected in the network topology obtained from $\mc T$ by removing the cells in $\mc J$. In other terms, the set $\mc K_{\mc J}$ includes $\mc J$ as well as all those cells that are connected to the set $\mc S$ only though paths passing through $\mc J$. (See Figure \ref{fig:cut}.) Then, define the \emph{min-cut residual capacity} as 
\be\label{def:min-cut}C_{\text{min-cut}}=\min_{\mc J\subseteq\mc I}\left\{\left(\sum\nolimits_{j\in\mc J}C_j-\sum\nolimits_{k\in\mc K_{\mc J}}u_k\right)\vee0\right\}\,.\ee

\begin{figure}
\begin{center}
\includegraphics[width=8.5cm]{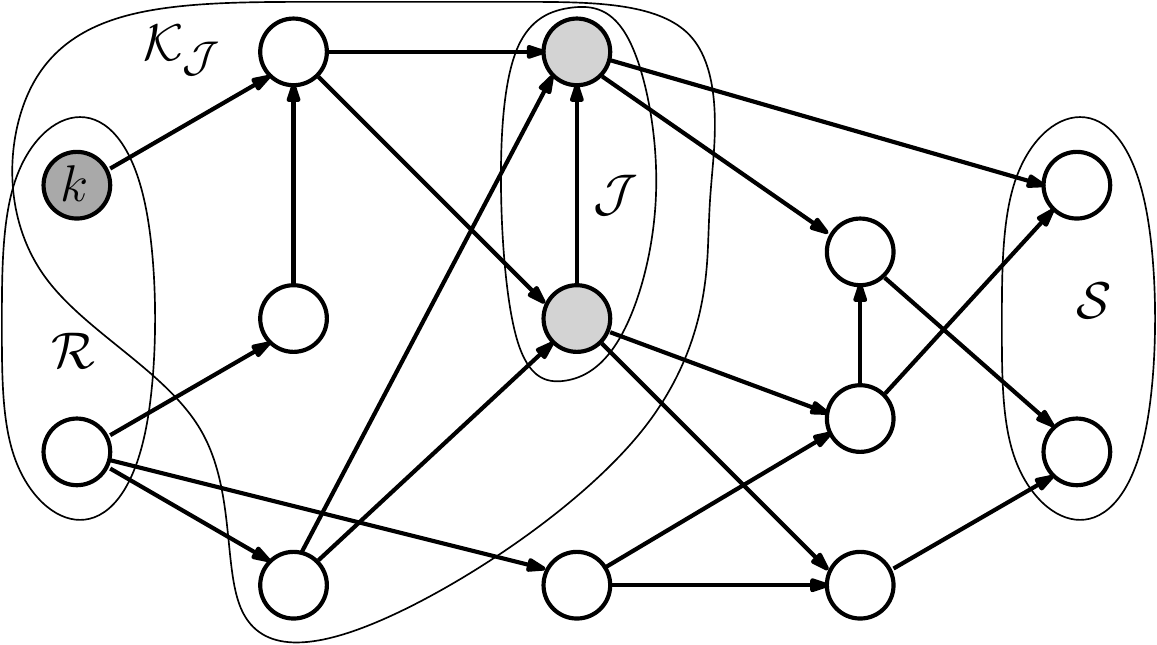}\end{center}
\caption{\label{fig:cut} The residual capacity of a cut $\mc J\subseteq\mc I$ is the aggregate capacity of the cells $j\in\mc J$ minus the total external inflow in the cells $k\in\mc K_{\mc J}$ where $\mc K_{\mc J}$ is the set of cells that either belong to $\mc J$ or are no longer outflow-connected if the cells in $\mc J$ are removed. In the network topology displayed above, $\mc J$ consists of two cells (colored in light grey), while $\mc K_{\mc J}$ consists of five cells. In this case, since $\mc K_{\mc J}\cap\mc R=\{k\}$, where $k$ is the cell colored in dark grey, the total external inflow in $\mc K_{\mc J}$ corresponds to the inflow $u_k$ in cell $k$.} 
\end{figure}

As stated in the following theorem, the min-cut residual capacity provides an upper bound 
\be\label{min-cutUB}\nu\le C_{\text{min-cut}}\ee
on the margin of resilience of any dynamical flow network with demand constraints \eqref{dynGR} with arbitrary routing and flow control policies. The inequality \eqref{min-cutUB} can be proved using a mass conservation argument that follows from the inherent structure of dynamical flow networks \eqref{compact} and can be thought of as the analogue of a classical result for static network flows. In fact, for such static network flows,  tightness of the min-cut bound is guaranteed by the celebrated max-flow min-cut theorem \cite{FordFulkersonCJM56,EliasIRETIT56}, which can be interpreted as a linear-programming duality result. For dynamical flow networks, the dynamic constraints pose additional challenges to the achievement of the maximum margin of resilience, i.e., tightness of the upper bound \eqref{min-cutUB}. The following theorem states that such maximum margin of resilience is achieved on arbitrary network topologies by the combined locally responsive feedback routing and flow control policies introduced in Example \ref{ex3}. In contrast, it is proved that not only the fixed routing policies of Example \ref{ex1} but also the locally responsive feedback routing without flow control of Example \ref{ex2} fall short of achieving maximum margin of resilience on arbitrary network topologies.  
%

\begin{theorem}\label{theo:resilience} Let $\mc T=(\mc I,\mc A,\mc R,\mc S)$ be an outflow-connected network topology, $u$ a constant external inflow vector supported on $\mc S$, and $\varphi(x)$ a vector of strictly increasing demand functions. Consider a dynamical flow network \eqref{dynGR} with routing policy $R(x)$ and flow control $\Gamma(x)$, and let $\nu$ be its margin of resilience. Then, 
\begin{enumerate}
\item[(i)] the upper bound \eqref{min-cutUB} holds true. 
\end{enumerate}
Moreover: 
\begin{enumerate}
\item[(ii)] for locally responsive routing $R(x)$ and flow control $\Gamma(x)$ as in Example \ref{ex3}, 
$$\nu=C_{\text{min-cut}}\,,$$
i.e., the margin of resilience coincides with the min-cut residual capacity.
\end{enumerate}
On the other hand, if the network topology $\mc T$ is such that $\mc G=(\mc I,\mc A)$ is the line-digraph of an acyclic digraph $\mc N=(\mc V,\mc I)$ (as per Remark \ref{ft}), then
\begin{enumerate}
\item[(iii)] if the routing matrix $R$ is constant and there is no flow control, i.e., $\Gamma=I$, then the margin of resilience coincides with the minimum cell residual capacity 
\be\label{resilience1} \nu=\min_{i\in\mc I}\{(C_i-z_i^*),\vee0\}\,,\qquad z^*=(I-R^T)^{-1}u\,;\ee
\item[(iv)] if $R(x)$ is a locally responsive feedback routing policy with no flow control as in Example \ref{ex2}, the margin of resilience satisfies 
\be\label{resilience2} \nu=\min_{i\in\mc I}\left\{\sum_{j\in\mc E_i}(C_j-z_j^*)\right\}\,,\ee
where $z^*=\varphi(x^*)$ with $x^*$ being the equilibrium if the unperturbed network is globally asymptotically stable, and $z^*=C$ otherwise.  
\end{enumerate}
\end{theorem}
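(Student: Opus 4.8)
The plan is to establish the universal upper bound (i) by a mass-conservation argument, and then to obtain each matching value in (ii)--(iv) as a lower bound using the monotone-systems machinery of Theorem~\ref{theo:stability-monotone}. In every case the perturbed network \eqref{pertnetwork} stays monotone, so it is stable precisely when the trajectory started at $x(0)=0$---which is monotone nondecreasing by Theorem~\ref{theo:stability-monotone}(vi)---remains bounded; the underlying dichotomy follows as in Examples~\ref{ex1}--\ref{ex3} from Theorem~\ref{theo:stability-monotone}(ii),(iv). Thus each lower bound reduces to showing that, below the claimed threshold, this single trajectory cannot diverge.

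For claim (i) I would fix a cut $\mc J\subseteq\mc I$ and track $y=\sum_{i\in\mc K_{\mc J}}x_i$. By definition of $\mc K_{\mc J}$, deleting $\mc J$ disconnects $\mc K_{\mc J}\setminus\mc J$ from $\mc S$, so those cells lie outside $\mc S$ and have no links leaving $\mc K_{\mc J}$; hence every unit of flow exiting $\mc K_{\mc J}$ must traverse a cell of $\mc J$ and is bounded by $\sum_{j\in\mc J}\tilde C_j$. Summing the mass-conservation law over $\mc K_{\mc J}$ and discarding the nonnegative inflow from outside yields $\dot y\ge\sum_{k\in\mc K_{\mc J}}\tilde u_k-\sum_{j\in\mc J}\tilde C_j$. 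A perturbation lowering the capacities on $\mc J$ and raising the inflows on $\mc K_{\mc J}\cap\mc R$ until this lower bound is a positive constant forces $y\to\infty$; its cheapest form costs $(\sum_{j\in\mc J}C_j-\sum_{k\in\mc K_{\mc J}}u_k)\vee0$, and minimizing over $\mc J$ gives $\nu\le C_{\text{min-cut}}$.

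For the fixed-topology cases I would compute the perturbed equilibrium directly. In (iii) the equilibrium outflow is pinned by the inflow alone, $\tilde z^*=(I-R^T)^{-1}\tilde u$, regardless of the capacities (Example~\ref{ex1}); acyclicity makes $(I-R^T)^{-1}=\sum_{k\ge0}(R^T)^k$ a finite nonnegative matrix whose entries are flow-through fractions in $[0,1]$, so $[(I-R^T)^{-1}(\tilde u-u)]_i\le\sum_j|\tilde u_j-u_j|$. A cell $i$ can lose its equilibrium only when $\tilde z_i^*\ge\tilde C_i$, which forces $\delta\ge[(I-R^T)^{-1}(\tilde u-u)]_i+\|\varphi_i-\tilde\varphi_i\|_\infty\ge C_i-z_i^*$; since lowering the single tightest capacity to $z_i^*$ destabilizes at exactly that cost, $\nu=\min_i\{(C_i-z_i^*)\vee0\}$. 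Case (iv) has the same skeleton, but the locally responsive routing of Example~\ref{ex2} lets cell $i$ spread its outflow over all of $\mc E_i$, so blocking it requires saturating the entire out-neighborhood: using property~(c) I would show that along a diverging trajectory the split ratios concentrate flow onto the out-neighbors that remain uncongested, so that all of $\mc E_i$ must reach capacity before $i$ overflows, replacing the residual $C_i-z_i^*$ by $\sum_{j\in\mc E_i}(C_j-z_j^*)$.

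The main obstacle is claim (ii), the tightness of the universal bound for the combined policy of Example~\ref{ex3}, which I would prove by contradiction. If a perturbation of magnitude $\delta<C_{\text{min-cut}}$ destabilized the network, the monotone trajectory from $0$ would diverge on some set $\mc J=\{i:\,x_i(t)\to\infty\}$, and the real work lies in analyzing the limiting flows on $\mc J$. The flow-control properties~(f)--(g) force a diverging cell to output at full demand toward any bounded downstream cell while throttling the inflow it receives from diverging upstream cells, and routing property~(c) diverts flow away from the congested region; the \emph{crux} is to show that, in this limit, the boundary of $\mc J$ behaves exactly like the cut $\mc K_{\mc J}$ of part~(i), so that the net imbalance sustaining the blow-up equals $\sum_{k\in\mc K_{\mc J}}\tilde u_k-\sum_{j\in\mc J}\tilde C_j$, which is at most $\delta-C_{\text{min-cut}}<0$ and hence precludes divergence. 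Establishing this limiting cut-identification---bounding the split ratios and flow-control parameters uniformly while several masses grow at possibly different rates---is the technical heart of the argument, and it is exactly here that the specific structure of the locally responsive policies becomes indispensable.
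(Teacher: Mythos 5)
The paper itself contains no self-contained proof of this theorem: its ``proof'' is four citations (part (i) to \cite{ComoPartITAC13}, part (ii) to \cite[Theorem 1(i)]{Como.Lovisari.ea:TCNS15}, part (iv) to \cite{ComoPartIITAC13}, with (iii) noted as a simpler variant), so the benchmark is whether your reconstruction supplies sound arguments. Your parts (i) and (iii) do. For (i) you correctly exploit that cells in $\mc K_{\mc J}\setminus\mc J$ have no links leaving $\mc K_{\mc J}$, so that $\dot y\ge\sum_{k\in\mc K_{\mc J}}\tilde u_k-\sum_{j\in\mc J}\tilde C_j$, and that saturating this balance is the cheapest attack on the cut $\mc J$; this is exactly the mass-conservation argument the paper gestures at. For (iii), combining the monotone dichotomy of Example \ref{ex1} (instability iff $\tilde z_i^*\ge\tilde C_i$ for some $i$) with the observation that acyclicity makes the entries of $(I-R^T)^{-1}=\sum_{k\ge 0}(R^T)^k$ lie in $[0,1]$ yields matching upper and lower bounds, and the explicit destabilizing perturbation $\tilde\varphi_i=\min\{\varphi_i,z_i^*\}$ attains the cost $C_i-z_i^*$.

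Parts (ii) and (iv), however, are the substance of the theorem, and your proposal does not prove them; you defer what you yourself call ``the technical heart,'' and each of these is a full theorem in the cited papers rather than a fillable detail. Moreover, the cut identification you sketch for (ii) does not close as stated. If $\mc J$ is the diverging set, the flow leaving $\mc J$ is \emph{not} $\sum_{j\in\mc J}\tilde C_j$: by properties (c) and (g) only the boundary cells of $\mc J$ (those with a bounded out-neighbor, or in $\mc S$) discharge out of $\mc J$ at full capacity, while the outflow of interior diverging cells stays inside $\mc J$ and cancels in the balance. The necessary condition for blow-up is therefore $\sum_{k\in\mc K_{\mc J}}\tilde u_k\ge\sum_{j\in\mc J_\partial}\tilde C_j$, where $\mc J_\partial\subseteq\mc J$ denotes that boundary; since $\sum_{j\in\mc J}\tilde C_j\ge\sum_{j\in\mc J_\partial}\tilde C_j$, your conclusion $\sum_{k\in\mc K_{\mc J}}\tilde u_k-\sum_{j\in\mc J}\tilde C_j<0$ does not contradict it. The argument must instead apply the min-cut bound \eqref{def:min-cut} to the cut $\mc J_\partial$ (every path from $\mc K_{\mc J}$ to $\mc S$ exits $\mc J$ through $\mc J_\partial$, hence $\mc K_{\mc J}\subseteq\mc K_{\mc J_\partial}$), and then establish uniformly along the trajectory the limiting-flow claims: full-capacity discharge toward bounded neighbors, vanishing routing into diverging cells, and asymptotic transfer of the external inflows of $\mc K_{\mc J}\setminus\mc J$ into $\mc J$. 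That analysis is precisely the content of \cite[Theorem 1(i)]{Como.Lovisari.ea:TCNS15}. Likewise, in (iv) the assertion that ``all of $\mc E_i$ must reach capacity before $i$ overflows'' is the conclusion of the argument in \cite{ComoPartIITAC13}, not a step you have supplied.
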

\begin{proof}
Part (i) can be proved as in \cite{ComoPartITAC13}. Part (ii) follows from \cite[Theorem 1(i)]{Como.Lovisari.ea:TCNS15}. Part (iv) is proved in \cite{ComoPartIITAC13}, while part (iii) can be proved with a similar (and in fact simpler) argument. 
\qed\end{proof}\medskip

While Part (i) of Theorem \ref{theo:resilience} captures the physically intuitive fact that a dynamical flow network cannot be stable if the external inflow violates some cut capacity constraint, Part (ii) states quite a remarkable result in that a global objective (maximum margin of resilience) is achievable by completely decentralized routing and flow control policies that use no global knowledge whatsoever of the network state or structure. That  this is nontrivial is confirmed by Parts (iii) and (iv) of Theorem \ref{theo:resilience} stating that, if there is no flow control and the routing is either fixed or locally responsive, then the margin of resilience coincides with, respectively, the minimal cell residual capacity \eqref{resilience1}, and the minimum out-neighborhood resilience capacity \eqref{resilience2}, quantities that have been proved to be arbitrarily far away from the min-cut residual capacity \eqref{def:min-cut}: see \cite{ComoPartITAC13,ComoPartIITAC13}. As explained in \cite{Como.Lovisari.ea:TCNS15}, the reason why the combination of locally responsive feedback flow control and routing is able to achieve maximal resilience resides in the ability to properly propagate congestion ---and hence, implicitly, information--- in a controlled way, both upstream and downstream, thanks to the flow control, and to reroute flow towards less congested paths (thanks to the routing). In fact, Part (ii) of Theorem \ref{theo:resilience} should be compared with results available for packet-switched data networks, e.g., maximal throughput and stability of the back-pressure algorithm \cite{TassiulasTAC93}. In contrast, locally responsive routing by itself, as in Example \ref{ex2} is not sufficient to achieve maximal margin of resilience since it is not able to propagate information upstream but just to reroute flow in a locally optimal way, whereas fixed routing as in Example \ref{ex1} does even worse because it fails to reroute flow towards less congested paths.

\section{Conclusion} 
In this paper, we have surveyed stability and robustness properties of dynamical flow networks. These are a family of dynamical systems derived from mass conservation laws on directed graphs with exogenous inflows and outflows towards the external environment. After reviewing some structural stability properties of linear dynamical flow networks, we have focused on an important class of nonlinear dynamical flow networks characterized by an additional monotonicity property. While rich enough to encompass many examples of dynamical flow networks studied in the literature, including some well-established models for road traffic networks, the class of monotone dynamical flow networks has enough structure to allow for tractable analysis and scalable design. In particular, we have presented a non-expansiveness result in the $l_1$-metric for solutions of mononotone dynamical flow networks and applied it to their (differential) stability analysis. We have then focused on dynamical flow networks with cell demand constraints and introduced the margin of resilience as a quantitative measure of their robustness. We have shown that maximal margin of resilience on a given capacitated network flow topology coincides with a well-known graph-theoretical quantity, the min-cut residual capacity, and that it can be achieved by a combination of decentralized locally responsive feedback flow control and routing policies that use local information only and require no global knowledge of the network. 

It is worth emphasizing that the treatment in this paper has been focused on single-commodity first-order models of dynamical flow networks and specifically on their stability and robustness properties. Other aspects of dynamical flow networks of current interest that have not been discussed here include: optimal dynamical flow network control  (see, e.g., \cite{Como.ea:TRB16} for convex formulations with demand and supply constraints generalizing ideas in \cite{Ziliaskopoulos:00,GomesTRC06}); multi-scale models coupling dynamical flow networks with game-theoretic learning dynamics \cite{Como.Savla.ea:SICON13} modeling the selfish route choice behaviors of the drivers; extensions of the margin of resilience framework to multi-commodity dynamical flow networks for which only preliminary results are currently available \cite{Nilsson.ea:2014}. 

\section*{Acknowledgements}
The author is grateful to Professor Ketan Savla of the University of Southern California for a long-standing collaboration on the topics surveyed in this paper. He would also like to thank his other coauthors Enrico Lovisari, Munther A.~Dahleh, Daron Acemoglu, Emilio Frazzoli, and Anders Rantzer. Finally, the author wishes to thank Professor Tryphon Georgiou for inviting him to present these results at the 22nd International Symposium on Mathematical Theory of Networks and Systems, and Professor Francoise Lamnabhi-Lagarrigue for soliciting a contribution in this venue. 


\section*{References}

\bibliographystyle{elsarticle-num} 
\bibliography{bibliography2,bibliography}

\end{document}